%% file: main.tex
\documentclass[journal, onecolumn]{IEEETran}
\usepackage{color,graphicx}
\usepackage{amsmath,amssymb,amsfonts,amsthm}
\usepackage{mathtools}
\usepackage{microtype}
\usepackage{booktabs}
\usepackage[inline]{enumitem}
\usepackage{svg}
\usepackage{subcaption}
\usepackage{etoolbox}
\usepackage{gnuplottex}
\usepackage{xcolor}
\usepackage{tikz}
\usetikzlibrary{decorations.pathreplacing,calc}
\usepackage{hyperref}
\usepackage[linesnumbered,ruled,vlined]{algorithm2e}

\newcommand{\wmin}{w_{\min}}
\newcommand{\wlo}{w^{-}}
\newcommand{\whi}{w^{+}}

\newcommand{\wrep}{\widehat{w}}
\newcommand{\vrep}{\widehat{v}}
\newcommand{\rhorep}{\widehat{\rho}}
\newcommand{\Gc}{\mathcal{G}}
\newcommand{\Zc}{\mathcal{Z}}
\newcommand{\Gm}{\Gamma}
\newcommand{\wf}{\mathrm{wf}}
\newcommand{\gp}{\mathrm{gp}}

\theoremstyle{plain}
\newtheorem{theorem}{Theorem}
\newtheorem{lemma}[theorem]{Lemma}
\newtheorem{proposition}[theorem]{Proposition}

\theoremstyle{definition}
\newtheorem{definition}[theorem]{Definition}
\newtheorem{assumption}[theorem]{Assumption}

\theoremstyle{remark}

\providecommand{\IEEEPARstart}[2]{#1#2}
\providecommand{\IEEEtriggeratref}[1]{}

\allowdisplaybreaks
\setlist{topsep=2pt,itemsep=1pt,parsep=0pt}

\begin{document}

% \begin{frontmatter}

\title{A Group-Based Resource Allocation Model for the Fractional Knapsack Problem}

% \author[1]{Abhinaba Chakraborty\corref{cor1}}
% % \ead{chakraborty@ieee.org}
% % \cortext[cor1]{Corresponding author}
% \affiliation[1]{
%     organization={ID Lab, University of Ghent - imec}, 
%     % addressline={}, 
%     city={Ghent}, 
%     postcode={9052}, 
%     country={Belgium}
% }
\author{
\IEEEauthorblockN{Abhinaba Chakraborty\IEEEauthorrefmark{1}, 
                  % Wouter Tavernier\IEEEauthorrefmark{1},
                  % Mario Pickavet\IEEEauthorrefmark{1}, 
                  % Didier Colle\IEEEauthorrefmark{1}
                  }\\
\IEEEauthorrefmark{1}ID Lab, University of Ghent-imec, Ghent, Belgium}
\maketitle
\begin{abstract}
\input{utils/abstract}
\end{abstract}

% \begin{keyword}
% Fractional knapsack problem \sep Continuous knapsack \sep Water-filling \sep Axiomatic fair division 
% \end{keyword}

% \end{frontmatter}

% \input{utils/problem}

\input{utils/motivation}

\input{utils/algorithm}

\input{utils/guarantees}

\input{utils/related_works}

\input{utils/Evaluation}

\bibliographystyle{elsarticle-num}
\bibliography{refs}

\end{document}

%% file: utils/abstract.tex
\noindent
To solve the fractional knapsack problem, Dantzig's greedy rule orders items according to their value-to-cost ratio. This ordering introduces priority issues. An arbitrarily small perturbation to the input can change the allocation if the budget is exhausted between two items with very similar ratios. To mitigate that problem, we introduce a two-stage rule. We group items sharing attributes within a radius $\delta$. We then evaluate these groups in descending order of ratio and divide their group's budget share without further ranking. Consider a group featuring an aggregate capacity $U_G$, unit costs contained in $[w^-,w^+]$, and a representative value $\widehat{v}$. The group's loss relative to the exact optimum is bounded by $\widehat{v}\, U_G\frac{w^+-w^-}{w^++w^-}+\varepsilon_v U_G$, where $\varepsilon_v$ limits the group's internal value variation. Moreover, this harmonic factor remains tight for any group size. The overall loss becomes restricted to the single budget-binding group whenever the grouping remains order-compatible; thus, groups containing at most $K$ items suffer a per-item loss of $\mathcal{O}(\frac{K}{n})$. Should group ratio intervals exhibit an overlap of at most $\omega$, an additive term $\omega C$ degrades this bound. Within the separation margin between adjacent groups, the grouped allocation remains Lipschitz continuous with respect to cost data, exhibiting a modulus of $\frac{K}{w_{\min}}$. Computing this allocation takes $\mathcal{O}(n+m\log m+|\Gamma|\log|\Gamma|)$ time given $m$ groups and a boundary group $\Gamma$. Alternatively, the time complexity drops to $\mathcal{O}(n+m\log m)$ if a linear-time selection method identifies the boundary group's allocation.

%% file: utils/motivation.tex
\section{Motivation: Discontinuity and Spurious Ranking}\label{sec:motivation}
A set of $n$ items defines an instance $I=(v,w,u,C)$, where items are indexed by $[n]\coloneqq\{0,\dots,n-1\}$. Every item has a value $v_i>0$, a cost $w_i>0$, and a capacity limit $u_i\in[0,1]$. A total budget $C\ge0$ is also provided. Given a subset $S\subseteq[n]$, we denote its total capacity as $U_S\coloneqq\sum_{i\in S}u_i$, whereas $F^w_S\coloneqq\sum_{i\in S}w_iu_i$. We refer to $\rho_i\coloneqq \frac{v_i}{w_i}$ as the efficiency ratio for item $i$. Thus, the bounded fractional knapsack problem can be stated as follows:
\begin{gather}
    V^\star(I)\coloneqq\max_{z\in\Zc(I)}\ \sum_{i\in[n]}v_iz_i,\nonumber \\
    \Zc(I)\coloneqq\Bigl\{z\in\prod_{i\in[n]}[0,u_i]:\ \sum_{i\in[n]}w_iz_i\le C\Bigr\}
    \label{eq:P}
\end{gather}
Throughout this article, we assume $C<F^w_{[n]}$; hence, the budget cannot saturate all items simultaneously. We will use the notation $V^\star(C)$ in contexts where $(v,w,u)$ remain fixed while the budget varies.

\begin{lemma}[Optimality conditions]\label{lem:price}
A given feasible allocation $z^\star\in\Zc(I)$ becomes an optimal solution to~\eqref{eq:P} if and only if $\exists\,\nu^\star\ge0$ satisfying $\nu^\star\bigl(C-\sum_iw_iz^\star_i\bigr)=0$ such that the following condition holds:
\begin{equation}
    z^\star_i=
    \begin{cases}
        u_i,\ \rho_i>\nu^\star,\\
        0,\ \rho_i<\nu^\star,\\
        \in[0,u_i],\ \rho_i=\nu^\star
    \end{cases}
    \label{eq:bang-bang}
\end{equation}
Furthermore, on the interval $[0, F^w_{[n]}]$, $V^\star(C)$ behaves as a strictly increasing, piecewise linear, and concave function, satisfying $V^\star(C')\le V^\star(C)+\nu^\star(C'-C)$ for any $C'\in[0, F^w_{[n]}]$; thus, $V^\star(C)\ge\nu^\star C$.
\end{lemma}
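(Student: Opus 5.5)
The optimality characterization follows from LP duality, since \eqref{eq:P} is a linear program with box constraints and a single budget constraint. I will introduce a multiplier $\nu\ge0$ for the budget and form the Lagrangian $L(z,\nu)=\sum_i (v_i-\nu w_i)z_i+\nu C$. For fixed $\nu$, maximizing over the box $\prod_i[0,u_i]$ decouples coordinatewise. The coefficient of $z_i$ is $w_i(\rho_i-\nu)$, and $w_i>0$, so its sign is the sign of $\rho_i-\nu$. Coordinatewise maximization therefore gives exactly the bang-bang rule \eqref{eq:bang-bang}.

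For sufficiency, take $z^\star$ satisfying \eqref{eq:bang-bang} and complementary slackness, and any feasible $z$. Then
\[
\sum_i v_iz_i\le L(z,\nu^\star)\le L(z^\star,\nu^\star)=\sum_i v_iz^\star_i ,
\]
where the first inequality uses $\nu^\star\ge0$ with feasibility, and the final equality uses complementary slackness.

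For necessity, I will use strong LP duality, or a direct exchange argument, to produce $\nu^\star$. The direct argument runs as follows. If the budget is slack at an optimum, every item must be saturated, otherwise value could be increased. Saturating all items contradicts $C<F^w_{[n]}$, so for $C>0$ the budget binds. I then set $\nu^\star$ as follows:
\[
\nu^\star=\min\{\rho_i: z^\star_i>0\}.
\]
I must show that $z^\star_j=u_j$ whenever $\rho_j>\nu^\star$. If some $j$ with $\rho_j>\rho_i$, $z^\star_i>0$, had $z^\star_j<u_j$, then shifting budget $\epsilon$ from $i$ to $j$ changes value by $\epsilon(\rho_j-\rho_i)>0$, contradicting optimality. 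The same exchange gives $z^\star_i=0$ whenever $\rho_i<\nu^\star$, because such an $i$ with $z_i^\star>0$ would contradict minimality of $\nu^\star$. The edge case $C=0$ is handled by taking $\nu^\star=\max_i\rho_i$.

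For the structure of $V^\star$, I will sort items by decreasing $\rho$ (Dantzig's order). The optimal value is then the integral of a nonincreasing step function:
\[
V^\star(C)=\int_0^C r(s)\,ds,
\]
where $r(s)$ is the ratio of the item being filled at cumulative cost $s$. This follows because \eqref{eq:bang-bang} forces filling in that order. Since $r>0$ and $r$ is nonincreasing and piecewise constant, $V^\star$ is strictly increasing, piecewise linear and concave on $[0,F^w_{[n]}]$.

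For the supergradient inequality, I will use the dual bound directly rather than concavity. For any $C'$ and any $z$ feasible at budget $C'$,
\[
\sum_i v_iz_i\le \sum_i(v_i-\nu^\star w_i)z_i+\nu^\star C'\le \sum_i(v_i-\nu^\star w_i)z^\star_i+\nu^\star C' .
\]
The right-hand side equals $V^\star(C)+\nu^\star(C'-C)$, using $\sum_i w_iz^\star_i=C$ when $\nu^\star>0$. If $\nu^\star=0$ the inequality is immediate from the same computation. Setting $C'=0$ gives $0\le V^\star(C)-\nu^\star C$.

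The main obstacle is the necessity direction, specifically the construction of $\nu^\star$ together with its edge cases. These are a slack budget, which is excluded by the standing assumption $C<F^w_{[n]}$, and items tied at the threshold. The exchange argument handles both cleanly.
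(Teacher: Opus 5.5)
Your proof is correct. The sufficiency direction and the supergradient inequality are exactly the paper's Lagrangian chain \eqref{eq:lagrange-chain}. The real difference is in necessity.

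The paper never argues on the given optimum directly. It first builds the Dantzig allocation $z^g$ and checks that it satisfies \eqref{eq:bang-bang} with complementary slackness for $\nu=\rho_p$, so $z^g$ is optimal by sufficiency. It then uses $z^g$ as the certificate in \eqref{eq:lagrange-chain} against an arbitrary optimum $z$ with $C'=C$. Both ends equal $V^\star(C)$, so every inequality is tight. Term-by-term tightness then forces budget exhaustion and \eqref{eq:bang-bang} for $z$ with the same multiplier $\rho_p$.

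You instead argue locally on $z^\star$. A slack budget would force every item to be saturated, contradicting $C<F^w_{[n]}$. Pairwise budget swaps rule out any unsaturated item whose ratio exceeds $\min\{\rho_i:z^\star_i>0\}$.

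\textbf{What your route buys.} It is more elementary, since it is just ``no profitable reallocation''. It also adapts to settings with no closed-form greedy solution.

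\textbf{What the paper's route buys.} Your route needs the separate case $C=0$, and it does not by itself show that Dantzig's rule is optimal. The paper obtains that fact in one stroke, along with the explicit segment formula for $V^\star$. That fact is reused in Theorem~\ref{th:localise-master} and as the experimental baseline.

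\textbf{One step to state explicitly.} Because of this, your identity $V^\star(C)=\int_0^C r(s)\,ds$ silently relies on one of two facts:
\begin{itemize}
\item existence of an optimum (compactness of $\Zc(I)$), combined with your necessity result; or
\item more directly, the observation that $z^g$ satisfies \eqref{eq:bang-bang} with complementary slackness, and is therefore optimal by your own sufficiency step.
\end{itemize}
Say which one you use.

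\textbf{Side remark.} The two multipliers need not coincide. For $C>0$, yours, $\min\{\rho_i:z^\star_i>0\}$, is the left slope of $V^\star$ at $C$. The paper's $\rho_p$, with $p$ the smallest index such that $\sum_{i\le p}w_iu_i>C$, is the right slope. They differ exactly at breakpoints of $V^\star$, and both are valid supergradients.
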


\begin{proof}
Let us fix some $\nu\ge0$ along with a $z^\star\in\Zc(I)$ that satisfies both~\eqref{eq:bang-bang} for $\nu^\star=\nu$ and the complementary slackness condition $\nu\bigl(C-\sum_iw_iz^\star_i\bigr)=0$. Because $w_i>0$, the expression $v_i-\nu w_i$ necessarily shares the same sign as $\rho_i-\nu$. Consequently, condition~\eqref{eq:bang-bang} simply dictates that $z^\star_i$ must maximise the product $(v_i-\nu w_i)z_i$ within the interval $[0,u_i]$. Therefore, taking any arbitrary $z$ subject to $0\le z_i\le u_i$ and $\sum_iw_iz_i\le C'$, we obtain
\begin{equation}
    \sum_iv_iz_i\ \le\ \nu C'+\sum_i(v_i-\nu w_i)z_i\ \le\ \nu C'+\sum_i(v_i-\nu w_i)z^\star_i\ =\ \sum_iv_iz^\star_i+\nu(C'-C),
    \label{eq:lagrange-chain}
\end{equation}
Notice that the final equality follows from the assumption that $\nu\bigl(C-\sum_iw_iz^\star_i\bigr)=0$. Setting $C'=C$ immediately demonstrates the optimality of $z^\star$. Following this, the relation $V^\star(C')\le V^\star(C)+\nu(C'-C)$ naturally emerges as the required supergradient inequality.

To prove the converse, we begin by explicitly constructing an allocation adhering to this structure. First, sort and relabel the items ensuring that $\rho_0\ge\dots\ge\rho_{n-1}$. Next, identify $p$ as the smallest index satisfying $\sum_{i\le p}w_iu_i>C$; such an index is guaranteed to exist due to the condition $C<F^w_{[n]}$. We then define $z^g_i=u_i$ for all $i<p$, assign $z^g_p=\bigl(C-\sum_{i<p}w_iu_i\bigr)/w_p\in[0,u_p)$, and enforce $z^g_i=0$ for any $i>p$. This allocation $z^g$ fully consumes the budget $C$ while satisfying~\eqref{eq:bang-bang} by setting $\nu^\star=\rho_p>0$, confirming its optimality. Now, consider an arbitrary optimal solution $z$. We can invoke~\eqref{eq:lagrange-chain} using $z^\star=z^g$, $C'=C$, and $\nu=\rho_p$. Because the expressions at both ends evaluate to $V^\star(C)$, the intermediate inequalities must actually be exact equalities. The first equality implies $\rho_p\bigl(C-\sum_iw_iz_i\bigr)=0$, demonstrating that $z$ must exhaust the entire budget. The second equality applies on a term-by-term basis, meaning $(v_i-\rho_pw_i)z_i=(v_i-\rho_pw_i)z^g_i$. This equation strictly requires $z_i=u_i$ whenever $\rho_i>\rho_p$, and similarly $z_i=0$ whenever $\rho_i<\rho_p$. Consequently, we obtain exactly condition~\eqref{eq:bang-bang} with $\nu^\star=\rho_p$. Lastly, setting $\nu^\star=0$ would be contradictory in~\eqref{eq:bang-bang}. Given that all $\rho_i>0$, this setting would require $z^\star=u$, producing an infeasible allocation.

Our procedure for building $z^g$ likewise characterises the function $V^\star$. Specifically, whenever $C$ falls within the interval $\bigl[\sum_{i<p}w_iu_i,\sum_{i\le p}w_iu_i\bigr]$, the corresponding optimal value evaluates to $\sum_{i<p}v_iu_i+\rho_p\bigl(C-\sum_{i<p}w_iu_i\bigr)$. This expression is affine and has a strictly positive slope of $\rho_p>0$. Furthermore, the sequence of slopes $\rho_0\ge\rho_1\ge\cdots$ is non-increasing. These properties collectively ensure that $V^\star$ remains concave, strictly increasing, and piecewise linear. Applying the supergradient inequality at $C'=0$, combined with the fact that $V^\star(0)=0$, yields the lower bound $V^\star(C)\ge\nu^\star C$.
\end{proof}

\noindent
The specific allocation $z^g$ detailed during the proof corresponds to the well-known Dantzig greedy strategy~\cite{dantzig1957}. This approach involves funding the items in descending order of their efficiency ratios $\rho_i$ until the budget is exhausted. In terms of computational complexity, the procedure requires $\mathcal{O}(n\log n)$ operations if it uses direct sorting. Alternatively, one can achieve $\mathcal{O}(n)$ execution time by relying on weighted median selection techniques~\cite{balas1980,megiddo1983,dyer1984,zemel1984}. Assuming all efficiency ratios are distinct, at most one item will be assigned a fractional quantity, while all remaining coordinates must be set to either $0$ or $u_i$. Consequently, the resulting solution always corresponds to a vertex of the feasible set $\Zc(I)$. This vertex is dictated purely by the relative ordering of the ratios, not their precise numerical values, which ultimately introduces inherent instability.

\begin{proposition}[No Lipschitz modulus]\label{prop:eps}
Assume we are given parameters $\tilde w,\varepsilon>0$, values $v=(1,1)$, capacities $u=(1,1)$, and a budget limit $C=\tilde w$. We analyze two distinct cost vectors: $w=(\tilde w,\tilde w+\varepsilon)$ alongside $w'=(\tilde w+\varepsilon,\tilde w)$. Under these conditions, the uniquely optimal allocation vectors evaluate to $z^\star(w)=(1,0)$ and $z^\star(w')=(0,1)$, which implies
\begin{equation}
    \frac{\|z^\star(w)-z^\star(w')\|_1}{\|w-w'\|_\infty}\ =\ \frac{2}{\varepsilon}\ \xrightarrow[\varepsilon\downarrow0]{}\ \infty .
    \label{eq:lp-quotient-div}
\end{equation}
Because every norm on $\mathbb{R}^2$ is topologically equivalent, the mapping to the optimal solution cannot have a finite Lipschitz constant under any combination of norms.
\end{proposition}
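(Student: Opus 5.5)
The argument has two parts: identify the unique optimum for each cost vector using Lemma~\ref{lem:price}, then evaluate the quotient in \eqref{eq:lp-quotient-div} and transfer the conclusion to arbitrary norms.

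\emph{Optimum for $w=(\tilde w,\tilde w+\varepsilon)$.} The ratios are $\rho_0=1/\tilde w>\rho_1=1/(\tilde w+\varepsilon)$. Since $C=\tilde w<F^w_{[n]}=2\tilde w+\varepsilon$, the standing assumption holds. The greedy construction in the proof of Lemma~\ref{lem:price} gives $z^g_0=1$, which uses the whole budget, and $z^g_1=0$. So $z^g=(1,0)$ is optimal, with $\nu^\star=\rho_1$ (here $p=1$ and the fractional part is zero).

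For uniqueness, I use the converse part of Lemma~\ref{lem:price}: every optimal $z$ satisfies \eqref{eq:bang-bang} with $\nu^\star=\rho_p=\rho_1$ and exhausts the budget. Since $\rho_0>\nu^\star$, this forces $z_0=1$. Budget exhaustion then gives $\tilde w z_0+(\tilde w+\varepsilon)z_1=\tilde w$, so $z_1=0$. Hence $(1,0)$ is the unique optimum.

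\emph{Optimum for $w'=(\tilde w+\varepsilon,\tilde w)$.} Swapping the roles of the two items, the same argument gives the unique optimum $(0,1)$. Uniqueness is the only subtle point, because the tie case $\rho_i=\nu^\star$ in \eqref{eq:bang-bang} allows fractional values. Here it is resolved because the item with ratio equal to $\nu^\star$ is pinned to $0$ by budget exhaustion once the other item is saturated.

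\emph{The quotient.} We have $\|z^\star(w)-z^\star(w')\|_1=|1-0|+|0-1|=2$ and $\|w-w'\|_\infty=\varepsilon$. The ratio is therefore $2/\varepsilon$, which diverges as $\varepsilon\downarrow0$, establishing \eqref{eq:lp-quotient-div}.

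\emph{Arbitrary norms.} Let $\|\cdot\|_a$ and $\|\cdot\|_b$ be any norms on $\mathbb{R}^2$. By equivalence of norms there are constants $c,c'>0$ with $\|x\|_a\ge c\|x\|_1$ and $\|y\|_b\le c'\|y\|_\infty$. Then the corresponding quotient is at least $(c/c')\cdot 2/\varepsilon$, which is also unbounded. Consequently no finite Lipschitz constant exists for the solution map, for any choice of norms.
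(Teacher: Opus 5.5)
Your proof is correct, but it establishes uniqueness differently from the paper.

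\textbf{The paper's route.} It never invokes Lemma~\ref{lem:price}. For a feasible pair $(t,s)$ under $w$, it uses the budget constraint to get $s\le\tilde w(1-t)/(\tilde w+\varepsilon)$. This gives $t+s\le(\tilde w+\varepsilon t)/(\tilde w+\varepsilon)\le 1$, with equality only at $(1,0)$. The case $w'$ then follows by symmetry.

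\textbf{Your route.} You instead use the converse half of the price characterization: every optimum exhausts the budget and saturates every item with $\rho_i>\rho_p$. You also correctly identify the one delicate coordinate, the item whose ratio equals $\nu^\star=\rho_1$, and note that budget exhaustion forces it to $0$.

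\textbf{What each approach buys.}
\begin{itemize}
\item The paper's computation is fully self-contained and uses nothing beyond the two-item constraint.
\item Your argument exposes the mechanism: the budget lands exactly on a breakpoint of the greedy order, so the tie coordinate has no slack. It would therefore extend verbatim to larger instances of the same type.
\item The cost is that you rely on a fact established inside the proof of Lemma~\ref{lem:price}, namely that every optimum satisfies \eqref{eq:bang-bang} with the specific multiplier $\rho_p$, rather than on the lemma's statement. The statement alone would also suffice after a short case check on where $\nu^\star$ lies relative to $\rho_0$ and $\rho_1$.
\end{itemize}

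Finally, your treatment of arbitrary norms is more precise than the paper's one-line appeal to equivalence of norms. The explicit equivalence constants give a quotient of at least $(c/c')\cdot 2/\varepsilon$.
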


\begin{proof}
Suppose that the pair $(t,s)$ represents a feasible assignment under the cost vector $w$. The underlying budget restriction, given by $\tilde wt+(\tilde w+\varepsilon)s\le\tilde w$, forces the upper bound $s\le\tilde w(1-t)/(\tilde w+\varepsilon)$. This naturally leads to the inequality
\[
    t+s\ \le\ \frac{\tilde w+\varepsilon t}{\tilde w+\varepsilon}\ \le\ 1,
\]
where the sequence of inequalities achieves equality if and only if we have $t=1$ accompanied by $s=0$. Thus, $(1,0)$ is the unique optimal choice for the cost configuration $w$. By applying symmetric reasoning, we conclude that $(0,1)$ must be the strictly unique optimum when confronting $w'$.
\end{proof}

\noindent
This abrupt transition in the solution space is resilient against minor regularisation techniques or arbitrary tie-breaking rules. The optimal configuration immediately shifts between extreme vertices of $\Zc(I)$ the instant the relative ordering between $\rho_0$ and $\rho_1$ reverses, regardless of how minute that reversal might be. To ensure our chosen allocation reflects genuine variation in the dataset rather than measurement noise, our strategy must strictly avoid ranking items whose ratio disparity falls below the underlying data resolution. Consequently, Section~\ref{sec:algorithm} operationalises this concept by bundling items whenever their attribute distance falls within a margin $\delta$, allowing internal group allocations that bypass strict ranking. Subsequently, Section~\ref{sec:guarantees} formally quantifies both the efficiency sacrifice and the corresponding improvements in systemic stability.

%% file: utils/algorithm.tex
\section{The Grouped Allocation Algorithm}\label{sec:algorithm}
\noindent
Initially, we allocate shares to each group by evaluating them in descending order of their "representative" ratio. Then, we divide each allocation among the group's members without further ranking.

\begin{assumption}[Lipschitz data]\label{as:data}
Every item $i\in[n]$ possesses an attribute $a_i$ within a metric space $(A,d)$. Furthermore, functions $v\colon A\to(0,v_{\max}]$ and $w\colon A\to[\wmin,\infty)$ exist, where $\wmin>0$. These functions maintain Lipschitz continuity with constants $L_v$ and $L_w$, satisfying the conditions $v_i=v(a_i)$ and $w_i=w(a_i)$ across all $i$.
\end{assumption}

\begin{definition}[Metric grouping]\label{def:grouping}
We define a \emph{metric grouping} with a tolerance of $\delta>0$ as a partition $\Gc=\{G_0,\dots,G_{m-1}\}$ that divides $[n]$ into non-empty subsets. This partition includes representatives $\widehat a_k\in A$ ensuring that distance, $d(a_i,\widehat a_k)\le\delta$ holds true for every $i\in G_k$. Consequently, the maximum distance between any two items within the same group is bounded by $2\delta$. The terms $\vrep_k\coloneqq v(\widehat a_k)$, $\wrep_k\coloneqq w(\widehat a_k)$, and $\rhorep_k\coloneqq\vrep_k/\wrep_k$ denote the representative value, cost, and ratio, respectively.Define  $\wlo_k\coloneqq\min_{i\in G_k}w_i$ and $\whi_k\coloneqq\max_{i\in G_k}w_i$. For any $i\in G_k$, the following bounds:
\begin{equation}
    \mid v_i-\vrep_k\mid\le L_v\delta\coloneqq\varepsilon_v,\ \whi_k-\wlo_k\le 2L_w\delta\coloneqq\varepsilon_w
    \label{eq:eps-def}
\end{equation}
If the context involves only a single group $G$, we omit the subscript $k$.
\end{definition}

% \subsection{The intra-group rule}\label{sec:axioms}
\noindent
Consider a specific group $G$ alongside an allocation share $c\in[0,F^w_G]$. Since the variation among group members remains below $\delta$, the allocation mechanism dividing $c$ must rely entirely on clearing the assigned share and respecting individual ceilings.

\begin{definition}[Intra-group rules]\label{def:axioms}
An allocation $z\in\mathbb{R}^{G}$ satisfies the intra-group rules for the share $c$ if
\begin{enumerate}[label=\textbf{F\arabic*}]
    \item (Feasibility) $0\le z_i\le u_i$ for all $i\in G$;
    \item (Budget clearance) $\sum_{i\in G}w_iz_i=c$;
    \item (Equal protection) for all $i,j\in G$, if $z_i<z_j$ then $z_i=u_i$.
\end{enumerate}
\end{definition}
\noindent
By Rule \textbf{F3}, an item can be allocated less than a peer only if its personal ceiling restricts further allocation.

\begin{theorem}[Water-filling]\label{th:characterise}
An allocation $z\in\mathbb{R}^G$ satisfies \textbf{F1} and \textbf{F3} if and only if $\exists\,\zeta\ge0$ with
\begin{align}
    z_i=\min\{u_i,\zeta\}\ \forall i\in G .
    \label{eq:wf-rule}
\end{align}
For every $c\in[0,F^w_G]$ exactly one allocation satisfies \textbf{F1}-\textbf{F3}. It is given by~\eqref{eq:wf-rule} with $\zeta$ the unique solution in $[0,\bar u]$, $\bar u\coloneqq\max_{i\in G}u_i$, of
\begin{equation}
    W_G(\zeta)\coloneqq\sum_{i\in G}w_i\min\{u_i,\zeta\}\ =\ c .
    \label{eq:wg-def}
\end{equation}
\end{theorem}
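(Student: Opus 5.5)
We split the proof into three parts: the characterisation of \textbf{F1}+\textbf{F3}, existence of a solution of~\eqref{eq:wg-def}, and uniqueness.

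\emph{Characterisation.} For sufficiency, suppose $z_i=\min\{u_i,\zeta\}$ with $\zeta\ge0$. Then $0\le z_i\le u_i$, so \textbf{F1} holds. If $z_i<z_j$, then $z_i<z_j\le\zeta$, so the minimum defining $z_i$ is attained by $u_i$. Hence $z_i=u_i$, which is \textbf{F3}.

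For necessity, assume \textbf{F1} and \textbf{F3}, and set $\zeta\coloneqq\max_{j\in G}z_j\ge0$. Fix $i\in G$. If $z_i=\zeta$, then \textbf{F1} gives $\zeta=z_i\le u_i$, so $z_i=\min\{u_i,\zeta\}$. If $z_i<\zeta=z_j$ for some $j$, then \textbf{F3} gives $z_i=u_i<\zeta$, so again $z_i=\min\{u_i,\zeta\}$. This proves~\eqref{eq:wf-rule}.

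\emph{Existence.} By the characterisation, an allocation satisfying \textbf{F1}--\textbf{F3} is exactly a water-filling vector $\min\{u_i,\zeta\}$ whose level $\zeta$ satisfies $W_G(\zeta)=c$, which is \textbf{F2}. The map $W_G$ is continuous and nondecreasing on $[0,\bar u]$, with $W_G(0)=0$ and $W_G(\bar u)=\sum_{i\in G}w_iu_i=F^w_G$. The intermediate value theorem then yields some $\zeta\in[0,\bar u]$ with $W_G(\zeta)=c$ for every $c\in[0,F^w_G]$.

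\emph{Uniqueness.} This is the step that needs the most care, because $W_G$ need not be strictly increasing everywhere: it is constant on intervals where every $u_i$ lies below $\zeta$. In fact the uniqueness claim for $\zeta$ as literally stated can fail in degenerate cases. If every $u_i=0$, or more generally if $\zeta$ exceeds the relevant $u_i$, then $W_G$ is locally flat. The plan is to prove uniqueness of the \emph{allocation} directly, and to recover uniqueness of $\zeta$ where the statement needs it.

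Take $\zeta<\zeta'$ with $W_G(\zeta)=W_G(\zeta')$. Each term $w_i\min\{u_i,\cdot\}$ is nondecreasing, and the sums agree, so each term is equal at $\zeta$ and $\zeta'$. Since $w_i>0$, this gives $\min\{u_i,\zeta\}=\min\{u_i,\zeta'\}$ for every $i$, so the two allocations coincide. Combined with the characterisation, any two allocations satisfying \textbf{F1}--\textbf{F3} come from levels with equal $W_G$ value and are therefore identical.

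For $\zeta$ itself, observe that $W_G$ is strictly increasing on $[0,\bar u]$. For $\zeta<\zeta'\le\bar u$, the item $i^\ast$ with $u_{i^\ast}=\bar u$ satisfies $\min\{\bar u,\zeta\}=\zeta<\zeta'=\min\{\bar u,\zeta'\}$, so its term increases strictly while all other terms are nondecreasing. This gives strict monotonicity whenever $\bar u>0$, and hence uniqueness of $\zeta$ in $[0,\bar u]$. The degenerate case $\bar u=0$ is trivial: there $[0,\bar u]=\{0\}$.
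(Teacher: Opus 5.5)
Your proof is correct and follows essentially the same route as the paper: the same sufficiency and necessity arguments (with $\zeta=\max_{j}z_j$), and existence and uniqueness of the level from continuity of $W_G$ and its strict increase on $[0,\bar u]$ via an item with $u_i=\bar u$. The one variation is that you get uniqueness of the allocation termwise ($W_G(\zeta)=W_G(\zeta')$ forces $\min\{u_i,\zeta\}=\min\{u_i,\zeta'\}$ for every $i$ since $w_i>0$), whereas the paper clamps $\zeta$ to $\min\{\zeta,\bar u\}$ and reuses strict monotonicity; note also that your aside claiming uniqueness of $\zeta$ ``as literally stated'' can fail is inaccurate, since the statement only asserts uniqueness in $[0,\bar u]$ and $W_G$ is flat only for $\zeta\ge\bar u$, as your own final paragraph confirms.
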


\begin{proof}
Assuming $z$ takes the form given in~\eqref{eq:wf-rule}, condition \textbf{F1} follows trivially. Furthermore, $z_i<z_j\le\zeta\Rightarrow\min\{u_i,\zeta\}<\zeta$, yielding $z_i=u_i$, which satisfies \textbf{F3}. 

\noindent
In the opposite direction, suppose $z$ meets both \textbf{F1} and \textbf{F3}. Setting $\zeta\coloneqq\max_{j\in G}z_j$, if $z_i=\zeta$, condition \textbf{F1} implies $\zeta\le u_i$, thus making $z_i=\min\{u_i,\zeta\}$. Alternatively, if $z_i<\zeta$, we apply \textbf{F3} between $i$ and some index $j$ that reaches the maximum value. This gives $z_i=u_i<\zeta$, meaning $z_i=\min\{u_i,\zeta\}$ holds in this case as well.

Regarding the second assertion, the function $W_G$ remains continuous, starting at $W_G(0)=0$ and reaching $W_G(\bar u)=F^w_G$. It also increases strictly over the interval $[0,\bar u]$. Specifically, taking $0\le\xi<\xi'\le\bar u$, any item $j$ where $u_j=\bar u$ will add a positive amount $w_j(\xi'-\xi)>0$ to the difference $W_G(\xi')-W_G(\xi)$, while no other terms decrease. Therefore, the equation $W_G(\zeta)=c$ possesses a single unique root $\zeta\in[0,\bar u]$. Because any allocation $z$ fulfilling \textbf{F1}--\textbf{F3} must match the form of~\eqref{eq:wf-rule}, substituting $\zeta$ with $\min\{\zeta,\bar u\}$ leaves $z$ unaffected. Thus, we can safely assume $\zeta\in[0,\bar u]$. From there, \textbf{F2} simplifies to $W_G(\zeta)=c$, which directly identifies both $\zeta$ and $z$. Working backwards, solving~\eqref{eq:wg-def} yields an allocation that satisfies all three constraints. This specific allocation is referred to as the \emph{water-filling} allocation and denoted by $z^\wf(c)$.
% It is impossible to weaken Rule \textbf{F3} to merely require equal treatment for items that remain strictly below their ceilings. For example, given parameters $w=(1,1)$, $u=(0.3,1)$, and $c=0.5$, an assignment of $(0.3,0.2)$ fulfills \textbf{F1}, \textbf{F2}, as well as the proposed weaker condition (trivially). However, it completely satisfies the demand of item~$0$ while leaving item~$1$ underfunded, despite both sharing the same cost. In contrast, applying the unmodified Rule \textbf{F3} correctly sets $\zeta=0.25$, yielding the allocation $(0.25,0.25)$.
\end{proof}

\noindent
We now index the groups to guarantee that $\rhorep_0\ge\dots\ge\rhorep_{m-1}$. We define $F^w_{\le k}\coloneqq\sum_{j\le k}F^w_{G_j}$, where $F^w_{\le-1}\coloneqq0$. The corresponding boundary index and leftover allocation are given by
\begin{equation}
    k^\star\coloneqq\min\{k\in[m]:F^w_{\le k}>C\},\ c^\star\coloneqq C-F^w_{\le k^\star-1}\in[0,F^w_{G_{k^\star}})
    \label{eq:kstar}
\end{equation}
The condition $C<F^w_{[n]}$ ensures that $k^\star$ is well-defined. We assign the share $c_j=F^w_{G_j}$ to group $G_j$ whenever $j<k^\star$. If $j=k^\star$, the group receives $c_j=c^\star$, whereas groups with $j>k^\star$ get $c_j=0$. Every allocated share is then distributed. We denote the boundary, fully allocated, and fully unallocated groups as,
\[
    \Gm\coloneqq G_{k^\star},\ \Gm^-\coloneqq\bigcup_{j<k^\star}G_j,\  \Gm^+\coloneqq\bigcup_{j>k^\star}G_j
\]
Thus, we express the grouped allocation as
\begin{equation}
    z^\gp_i=
    \begin{cases}
        u_i,\ i\in\Gm^-,\\
        0,\ i\in\Gm^+,\\
        z^\gp|_{\Gm}=z^\wf(c^\star),\ i\in\Gm
    \end{cases}
    \nonumber
\end{equation}
which yields $V^\gp(C)\coloneqq\sum_iv_iz^\gp_i$. Because $F^w_{\le k^\star-1}+c^\star=C$, the allocation $z^\gp$ proves feasible and exhausts the allocation budget entirely. Within any given group, the final distribution relies on $\zeta$. Consequently, items with identical ceilings are assigned identical shares, regardless of their individual ratios. 
% Furthermore, Theorem~\ref{th:stability} demonstrates that the resulting allocation exhibits Lipschitz continuous behavior when subjected to cost perturbations that do not exceed the gap separating adjacent groups.

\subsection{Algorithm and complexity}\label{sec:complexity}
Drawing on Theorem~\ref{th:characterise}, the function $W_\Gm$ is continuous, strictly monotonic, and piecewise linear over $[0,\bar u]$, with breakpoints precisely at the ceilings in $\Gm$. We order these ceilings as $u_{(0)}\le\dots\le u_{(q-1)}$, setting $q\coloneqq|\Gm|$ and $u_{(-1)}\coloneqq0$. Furthermore, we define $S^w_\ell\coloneqq\sum_{r=\ell}^{q-1}w_{(r)}$. Throughout $[u_{(\ell-1)},u_{(\ell)}]$, items $(r)$ with $r<\ell$ reach their caps, while the remaining items do not. Thus,
\begin{equation}
    W_\Gm(\zeta)\ =\ W_\Gm(u_{(\ell-1)})+(\zeta-u_{(\ell-1)})\,S^w_\ell .
    \label{eq:affine-segment}
\end{equation}
Algorithm~\ref{alg:water-level} performs a single scan across these segments. It halts upon locating the initial index $\ell^\star$ that satisfies $W_\Gm(u_{(\ell^\star)})\ge c^\star$, subsequently outputting the precise root
\begin{equation}
    \zeta^\star\ =\ u_{(\ell^\star-1)}+\frac{c^\star-W_\Gm(u_{(\ell^\star-1)})}{S^w_{\ell^\star}} ,
    \label{eq:exact-zeta-root}
\end{equation}
The strictly positive nature of the costs ensures that $S^w_{\ell^\star}>0$. Finally, Algorithm~\ref{alg:grouped-allocation} integrates the two phases.

\input{algorithm/algorithm}

\begin{theorem}[Complexity]\label{th:complexity}
Algorithm~\ref{alg:grouped-allocation} computes the exact allocation $z^\gp$, requiring $\mathcal{O}(n+m\log m+|\Gm|\log|\Gm|)$ time along with $\mathcal{O}(m+|\Gm|)$ auxiliary space. Should one determine the level across $\Gm$ via median-based selection~\cite{dyer1984,zemel1984} rather than straightforward sorting, the runtime improves to $\mathcal{O}(n+m\log m)$. Furthermore, if the input already provides the representative ratios in sorted order, the execution time drops to $\mathcal{O}(n)$.
\end{theorem}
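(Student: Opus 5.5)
We establish the three claims in turn: correctness, the general time/space bound, and the two refinements. The algorithm itself is not shown, so we assume Algorithm~\ref{alg:grouped-allocation} does the following. It computes the group aggregates $F^w_{G_k}$ (and the $\rhorep_k$ from the representatives) in one pass over the items. It sorts the $m$ groups by $\rhorep_k$ and scans prefix sums to find $k^\star$ and $c^\star$ as in~\eqref{eq:kstar}. It writes $u_i$ on $\Gm^-$ and $0$ on $\Gm^+$. Finally, it calls Algorithm~\ref{alg:water-level} on $\Gm$ with share $c^\star$.

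\emph{Correctness.} The prefix scan stops at the first $k$ with $F^w_{\le k}>C$, which is exactly $k^\star$; it exists because $C<F^w_{[n]}$. On $\Gm^\pm$ the output matches $z^\gp$ by definition, so only the boundary group needs an argument. By Theorem~\ref{th:characterise}, $z^\wf(c^\star)$ is determined by the unique root of $W_\Gm(\zeta)=c^\star$ in $[0,\bar u]$. Since $c^\star<F^w_\Gm=W_\Gm(u_{(q-1)})$, the scan reaches some first index $\ell^\star$ with $W_\Gm(u_{(\ell^\star)})\ge c^\star$. Minimality gives $W_\Gm(u_{(\ell^\star-1)})<c^\star$, or $\ell^\star=0$ with $W_\Gm(u_{(-1)})=0\le c^\star$. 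On that segment~\eqref{eq:affine-segment} holds with slope $S^w_{\ell^\star}>0$, so~\eqref{eq:exact-zeta-root} is the exact root. Setting $z_i=\min\{u_i,\zeta^\star\}$ then gives $z^\wf(c^\star)$.

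The values $W_\Gm(u_{(\ell)})$ and $S^w_\ell$ can be maintained incrementally in $\mathcal{O}(1)$ per step. Precompute the suffix sums $S^w_\ell$, then use $W_\Gm(u_{(\ell)})=W_\Gm(u_{(\ell-1)})+(u_{(\ell)}-u_{(\ell-1)})S^w_\ell$.

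\emph{Time and space.} The work breaks down as follows:
\begin{itemize}
\item aggregation and output writing: $\mathcal{O}(n)$;
\item sorting the groups: $\mathcal{O}(m\log m)$;
\item the prefix scan: $\mathcal{O}(m)$;
\item sorting the $q=|\Gm|$ ceilings: $\mathcal{O}(q\log q)$;
\item the scan, suffix sums, and final assignment: $\mathcal{O}(q)$.
\end{itemize}
Auxiliary storage consists of the group aggregates, a permutation of the groups, and the sorted ceilings with suffix sums. This is $\mathcal{O}(m+q)$; the output vector itself is not counted.

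\emph{Median-based selection.} This is the step needing the most care. We replace the sort of the ceilings by a Dyer--Zemel style search for $\ell^\star$. Maintain a candidate set $R$ of ceilings, together with the accumulated constants $\sum w_iu_i$ over items known to be capped below the current interval and $\sum w_i$ over items known to be uncapped above it. At each round, find the median ceiling $\mu$ of $R$ in linear time. Evaluate $W_\Gm(\mu)$ in $\mathcal{O}(|R|)$ using the accumulators. Then discard the half of $R$ on the wrong side of $c^\star$ and fold its contribution into the accumulators.

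The sizes shrink geometrically, so the total cost is $\mathcal{O}(q)\subseteq\mathcal{O}(n)$. At termination we have two consecutive breakpoints bracketing $c^\star$, and the root follows from~\eqref{eq:exact-zeta-root} using the accumulated slope. The main obstacle is verifying that the accumulators correctly reproduce $W_\Gm$ and the slope $S^w_{\ell^\star}$ after the discards, including ties among ceilings. This follows from the monotonicity of $W_\Gm$ established in Theorem~\ref{th:characterise}.

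\emph{Presorted representative ratios.} If the $\rhorep_k$ are already sorted, the $\mathcal{O}(m\log m)$ term vanishes. Combined with selection on $\Gm$, the total is $\mathcal{O}(n+m)=\mathcal{O}(n)$ since $m\le n$.
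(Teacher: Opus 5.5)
Your proposal is correct and follows the paper's proof essentially step for step. It uses the same cost breakdown: a linear aggregation pass, the $\mathcal{O}(m\log m)$ group sort with a prefix scan for $k^\star$, and the $\mathcal{O}(|\Gamma|\log|\Gamma|)$ ceiling sort followed by a linear scan whose invariant $W=W_\Gamma(u_{(\ell-1)})<c^\star$ certifies the exact root. It also uses the same space accounting and the same halving argument for the selection variant. You add one point the paper leaves implicit: the $\mathcal{O}(n)$ bound for presorted ratios also needs selection on $\Gamma$, because with $m=1$ the term $|\Gamma|\log|\Gamma|$ is $n\log n$.
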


\begin{proof}
The algorithm calculates both allocation demands and the $m$ ratios within $\mathcal{O}(n)$ time. We then sort these $m$ ratios in $\mathcal{O}(m\log m)$ operations before identifying $k^\star$ via prefix sums, which takes $\mathcal{O}(m)$ time. Then the algorithm applies values of either $u_i$ or $0$ across the $n-|\Gm|$ elements located outside $\Gm$, demanding $\mathcal{O}(n)$ time. Sorting the ceilings found within $\Gm$ takes $\mathcal{O}(|\Gm|\log|\Gm|)$ steps, and from there the subsequent traversal in Algorithm~\ref{alg:water-level} runs in linear time. We can verify the scan's correctness: beginning iteration $\ell$, we know that $W=W_\Gm(u_{(\ell-1)})<c$ alongside $S^w=S^w_\ell$. Consequently, by~\eqref {eq:affine-segment}, checking whether $W+S^w\Delta u\ge c$ identifies the earliest segment where $W_\Gm$ meets or exceeds $c$. The function then yields the exact value shown in~\eqref{eq:exact-zeta-root}. This verification step must succeed no later than $\ell=q-1$, given that $W_\Gm(u_{(q-1)})=F^w_\Gm\ge c$. Regarding memory, the algorithm stores $|\Gm|$ sorted ceilings and $m$ prefix sums. Alternatively, identifying the target segment of a monotonically increasing piecewise-linear function with $|\Gm|$ breakpoints is achievable through weighted-median selection, which eliminates half of the remaining breakpoints in each pass and finishes in $\mathcal{O}(|\Gm|)$ time. This alternative method directly justifies our second complexity bound, while the final $\mathcal{O}(n)$ bound naturally emerges when Stage~1 no longer requires a sorting step.
\end{proof}

%% file: algorithm/algorithm.tex
\begin{figure*}[!t]
% \centering

\begin{minipage}[t]{0.48\textwidth}
\begin{algorithm}[H]
\caption{Grouped Knapsack Allocation}
\label{alg:grouped-allocation}
\SetKwInOut{Input}{Input}
\SetKwInOut{Output}{Output}
\Input{$\{v_i,w_i,u_i\}_{i\in[n]}$, $\Gc$ with $(\widehat{a}_k)_{k\in[m]}$, $C$.}
\Output{Allocation $z^\gp\in\Zc(I)$.}
\tcp{Stage 1: budget shares}
\For{$k\leftarrow0$ \KwTo $m-1$}{
    $\rhorep_k\leftarrow v(\widehat{a}_k)/w(\widehat{a}_k)$\;
    $F^w_{G_k}\leftarrow\sum_{i\in G_k}w_iu_i$\;
}
Sort the groups so that $\rhorep_0\ge\dots\ge\rhorep_{m-1}$\;
$F^w_{\le-1}\leftarrow0$\;
\lFor{$k\leftarrow0$ \KwTo $m-1$}{
    $F^w_{\le k}\leftarrow F^w_{\le k-1}+F^w_{G_k}$
}
$k^\star\leftarrow\min\{k\in[m]:F^w_{\le k}>C\}$\;
$c^\star\leftarrow C-F^w_{\le k^\star-1}$\;

\BlankLine
\tcp{Stage 2: Allocation}
\lFor{$j<k^\star$, $i\in G_j$}{
    $z^\gp_i\leftarrow u_i$
}
\lFor{$j>k^\star$, $i\in G_j$}{
    $z^\gp_i\leftarrow0$
}
$\Gm\leftarrow G_{k^\star}$\;
$\zeta^\star\leftarrow
\operatorname{SolveWaterLevel}
(\Gm,(w_i)_{i\in\Gm},(u_i)_{i\in\Gm},c^\star)$\;
\lForEach{$i\in\Gm$}{
    $z^\gp_i\leftarrow\min\{u_i,\zeta^\star\}$
}
\Return{$z^\gp$}\;
\end{algorithm}
\end{minipage}
\hfill
\begin{minipage}[t]{0.48\textwidth}
\begin{algorithm}[H]
\caption{\textsc{SolveWaterLevel}}
\label{alg:water-level}
\SetKwInOut{Input}{Input}
\SetKwInOut{Output}{Output}
\Input{$\Gm$, $\{w_i\}_{i\in\Gm}$, $\{u_i\}_{i\in\Gm}$, $c\in[0,F^w_\Gm]$.}
\Output{Level $\zeta^\star\in[0,\max_{i\in\Gm}u_i]$ with $W_\Gm(\zeta^\star)=c$.}

\BlankLine
\lIf{$c=0$}{\Return{$0$}}
\lIf{$c\ge F^w_\Gm$}{
    \Return{$\max_{i\in\Gm}u_i$}
}
$q\leftarrow|\Gm|$\;
sort $\Gm$ so that
$u_{(0)}\le\dots\le u_{(q-1)}$\;
$u_{(-1)}\leftarrow0$\;
$W\leftarrow0$\;
$S^w\leftarrow\sum_{r=0}^{q-1}w_{(r)}$\;

\For{$\ell\leftarrow0$ \KwTo $q-1$}{
    $\Delta u\leftarrow u_{(\ell)}-u_{(\ell-1)}$\;
    \lIf{$W+S^w\Delta u\ge c$}{
        \Return{$u_{(\ell-1)}+(c-W)/S^w$}
    }
    $W\leftarrow W+S^w\Delta u$\;
    $S^w\leftarrow S^w-w_{(\ell)}$\;
}
\Return{$u_{(q-1)}$}\tcp*{unreachable: $W_\Gm(u_{(q-1)})=F^w_\Gm>c$}
\end{algorithm}
\end{minipage}

\end{figure*}

%% file: utils/guarantees.tex
\section{Theoretical Guarantees}\label{sec:guarantees}
\noindent
We define a loss function that quantifies how the grouped allocation performs relative to the exact optimal solution:
\begin{equation}
    E(C)\ \coloneqq\ V^\star(C)-V^\gp(C)\ \ge\ 0 ,
    \label{eq:error-def}
\end{equation}
% This value remains non-negative since $z^\gp\in\Zc(I)$. We assume an order-compatible grouping (Definition~\ref{def:ordercomp}) in Sections~\ref{sec:localisation} and~\ref{sec:stability-sec}, whereas Section~\ref{sec:interleave} handles groups containing overlapping ratio intervals.

% \subsection{Intra-group loss}\label{sec:intra-price}
\noindent
Take a single group $G$ from a metric grouping having a share of $c\in[0,F^w_G]$. Suppose $z^\wf=z^\wf(c)$ denotes the allocation, while $z^G$ denotes the optimal allocation that assigns this same share among the items in $G$ using their actual parameters. We define the intra-group loss as
\begin{equation}
    \Delta_G(c)\ \coloneqq\ \sum_{i\in G}v_iz^G_i-\sum_{i\in G}v_iz^\wf_i\geq 0
    \label{eq:gap-def}
\end{equation}
% This quantity cannot be negative because $z^\wf$ provides a feasible solution to the subproblem. Since the macro stage determines the share $c$, it is necessary to establish a bound that applies uniformly across $c$.

\begin{theorem}[Intra-group loss]\label{th:inner}
For every group $G$ and every $c\in[0,F^w_G]$,
\begin{equation}
    \Delta_G(c)\ \le\ \vrep\,U_G\,\frac{\whi-\wlo}{\whi+\wlo}+\varepsilon_vU_G
    \ \le\ |G|\Bigl(\vrep\,\frac{\whi-\wlo}{\whi+\wlo}+\varepsilon_v\Bigr)\le |G|\Bigl(\frac{\vrep L_w}{\wmin}+L_v\Bigr)\delta\ =\ \mathcal{O}(\delta)
    \label{eq:inner-bound}
\end{equation}
% Given that $\whi-\wlo\le2L_w\delta$, $\whi+\wlo\ge2\wmin$ and $\varepsilon_v=L_v\delta$, we obtain
% \begin{equation}
%     \Delta_G(c)\ \le\ U_G\Bigl(\frac{\vrep L_w}{\wmin}+L_v\Bigr)\delta\ =\ \mathcal{O}(\delta).
%     \label{eq:inner-bound-delta}
% \end{equation}
\end{theorem}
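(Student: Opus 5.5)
The plan is to split each item's value around the representative, $v_i=\vrep+(v_i-\vrep)$, and bound the two resulting pieces of $\Delta_G(c)$ separately:
\[
\Delta_G(c)=\vrep\sum_{i\in G}\bigl(z^G_i-z^\wf_i\bigr)+\sum_{i\in G}(v_i-\vrep)\bigl(z^G_i-z^\wf_i\bigr).
\]

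\textbf{Value-variation term.} Both $z^G$ and $z^\wf$ lie in $\prod_{i\in G}[0,u_i]$, so $|z^G_i-z^\wf_i|\le u_i$. Combined with $|v_i-\vrep|\le\varepsilon_v$ from~\eqref{eq:eps-def}, the second sum is at most $\varepsilon_vU_G$. What remains is to show that the mass excess $\sum_i(z^G_i-z^\wf_i)$ is at most $U_G\frac{\whi-\wlo}{\whi+\wlo}$, using only budget feasibility of $z^G$ and the water-filling structure of Theorem~\ref{th:characterise}.

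\textbf{Exchange argument for the mass term.} Let $P=\{i:z^G_i>z^\wf_i\}$ and $Q=G\setminus P$. Define the mass gain $\mathsf g=\sum_P(z^G_i-z^\wf_i)$ and the mass loss $\mathsf l=\sum_Q(z^\wf_i-z^G_i)$, so the mass excess is $\mathsf g-\mathsf l$.

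\emph{Budget comparison.} By \textbf{F2}, $\sum_Gw_iz^\wf_i=c$, while $z^G$ spends at most $c$. Hence
\[
\sum_Pw_i(z^G_i-z^\wf_i)\le\sum_Qw_i(z^\wf_i-z^G_i),
\qquad\text{so}\qquad \wlo\,\mathsf g\le\whi\,\mathsf l .
\]

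\emph{Structural caps from~\eqref{eq:wf-rule}.} For $i\in P$ we have $z^\wf_i<z^G_i\le u_i$, so item $i$ is not capped and $z^\wf_i=\zeta$. Therefore $\mathsf g\le a\coloneqq\sum_P(u_i-\zeta)$. On $Q$, $\mathsf l\le b\coloneqq\sum_Q\min\{u_i,\zeta\}$. Because $P$ consists of uncapped items, $u_i-\zeta\le u_i$ on $P$ and $\min\{u_i,\zeta\}\le u_i$ on $Q$, which gives $a+b\le U_G$.

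\emph{Small optimisation.} We now maximise $\mathsf g-\mathsf l$ subject to
\[
\mathsf g\le a,\qquad \mathsf l\le b,\qquad \mathsf l\ge \tfrac{\wlo}{\whi}\,\mathsf g,\qquad a+b\le U_G .
\]
For fixed $\mathsf g$ the best choice is $\mathsf l=\frac{\wlo}{\whi}\mathsf g$, so $\mathsf g-\mathsf l\le\mathsf g\,\frac{\whi-\wlo}{\whi}$. The constraint $\mathsf l\le b$ then forces $\mathsf g\le b\,\whi/\wlo\le(U_G-a)\whi/\wlo$. Hence $\mathsf g\le\min\{a,(U_G-a)\whi/\wlo\}$. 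Over $a$ this is maximised at $a=U_G\whi/(\whi+\wlo)$, which yields
\[
\mathsf g-\mathsf l\le U_G\,\frac{\whi-\wlo}{\whi+\wlo}.
\]
Multiplying by $\vrep>0$ and adding the value-variation term gives the first inequality in~\eqref{eq:inner-bound}.

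\textbf{Main obstacle.} The delicate step is this exchange bound. A naive estimate ($\sum z^G\le c/\wlo$ and $\sum z^\wf\ge c/\whi$) only gives the weaker factor $(\whi-\wlo)/\whi$. The harmonic factor needs the observation that gained mass can only occur on uncapped items sitting at level $\zeta$, so gain and loss draw on disjoint portions of the total capacity $U_G$. I would check carefully that $a+b\le U_G$ holds, in particular that items in $Q$ contribute at most $\min\{u_i,\zeta\}$ and items in $P$ at most $u_i-\zeta$.

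\textbf{Remaining inequalities.} These follow directly. Since $u_i\le1$, $U_G\le|G|$. From~\eqref{eq:eps-def}, $\whi-\wlo\le2L_w\delta$, and $\whi+\wlo\ge2\wmin$ by Assumption~\ref{as:data}. Together with $\varepsilon_v=L_v\delta$ this gives $|G|\bigl(\vrep L_w/\wmin+L_v\bigr)\delta=\mathcal{O}(\delta)$.
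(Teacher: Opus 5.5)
Your proof is correct, but it bounds the mass term by a different route than the paper. The paper never splits by sign. It bounds each allocation's total mass from the budget, $\sum_i z^G_i\le c/\wlo$ and $\sum_i z^\wf_i\ge c/\whi$, and does the same for the unused capacities $d_i=u_i-z_i$, which spend $F^w_G-c$. This gives two aggregate estimates, $\sum_i(z^G_i-z^\wf_i)\le F^w_G/\wlo-U_G$ and $\sum_i(z^G_i-z^\wf_i)\le U_G-F^w_G/\whi$. Averaging them with weights proportional to $\wlo$ and $\whi$ cancels $F^w_G$ and leaves $U_G\frac{\whi-\wlo}{\whi+\wlo}$.

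You instead work with the signed difference. Budget balance gives $\wlo\mathsf g\le\whi\mathsf l$, which is equivalent to $\mathsf g-\mathsf l\le\frac{\whi-\wlo}{\whi+\wlo}(\mathsf g+\mathsf l)$. Capacity then caps $\mathsf g+\mathsf l$ by $U_G$.

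The level-$\zeta$ caps you flag as the delicate point are never actually needed. You use them only through $a+b\le U_G$, and that already holds for $a=\sum_P u_i$, $b=\sum_Q u_i$ because $P$ and $Q$ are disjoint. So, like the paper's argument, yours proves the bound for any two box-feasible allocations with $\sum_i w_iz^G_i\le c\le\sum_i w_iz^\wf_i$. Your remark that the harmonic factor requires the water-filling structure is therefore mistaken. The paper's proof shows that the naive estimate, together with its mirror image on unspent capacity, already suffices.

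What each route buys:
\begin{itemize}
\item The paper's two estimates actually yield the sharper bound $\min\{F^w_G/\wlo-U_G,\;U_G-F^w_G/\whi\}$, which depends on $F^w_G$. It reaches the harmonic factor only when the average cost $F^w_G/U_G$ equals the harmonic mean of $\wlo$ and $\whi$.
\item Your route, written with $\mathsf g+\mathsf l=\|z^G-z^\wf\|_1$, yields $\Delta_G(c)\le\bigl(\vrep\frac{\whi-\wlo}{\whi+\wlo}+\varepsilon_v\bigr)\|z^G-z^\wf\|_1$. This puts both error terms on the same $\ell_1$ footing before the final crude step $\|z^G-z^\wf\|_1\le U_G$.
\end{itemize}
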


\begin{proof}
Since $v_i=\vrep+(v_i-\vrep)$, we have
\[
\Delta_G(c)
=\sum_{i\in G}v_i(z_i^G-z_i^\wf)
=\vrep\sum_{i\in G}(z_i^G-z_i^\wf)
+\sum_{i\in G}(v_i-\vrep)(z_i^G-z_i^\wf)
\]
By \eqref{eq:eps-def},$|v_i-\vrep|\le \varepsilon_v$. Since $0\le z_i^G,z_i^\wf\le u_i$, we have $|z_i^G-z_i^\wf|\le u_i$, and therefore
\begin{gather}
\sum_{i\in G}(v_i-\vrep)(z_i^G-z_i^\wf)
\le
\varepsilon_v\sum_{i\in G}|z_i^G-z_i^\wf|
\le \varepsilon_vU_G
\end{gather}

Now we define $d_i=u_i-z_i\ge0$. Both allocations consume exactly \(c\): this holds for $z^\wf$ by \textbf{F2}, and for $z^G$ because every $v_i>0$ by Assumption~\ref{as:data}, so an optimal allocation of the share never leaves budget unspent. Hence $\sum_{i\in G}w_id_i=F_G^w-c$. Because, $\wlo\le w_i\le\whi$, we obtain
\begin{gather}
\sum_{i\in G}z_i^G\le\frac{c}{\wlo},
\
\sum_{i\in G}d_i^\wf\le\frac{F_G^w-c}{\wlo}
\ \because
c=\sum_{i\in G}w_iz_i^G\ge
\wlo\sum_{i\in G}z_i^G \nonumber \\
F_G^w-c=\sum_{i\in G}w_id_i^\wf\ge\wlo\sum_{i\in G}d_i^\wf
\end{gather}
Similarly,
\begin{gather}
\sum_{i\in G}z_i^\wf\ge\frac{c}{\whi},
\sum_{i\in G}d_i^G\ge\frac{F_G^w-c}{\whi}
\ \because
c=\sum_{i\in G}w_iz_i^\wf\le
\whi\sum_{i\in G}z_i^\wf\\
F_G^w-c=\sum_{i\in G}w_id_i^G
\le
\whi\sum_{i\in G}d_i^G.
\end{gather}

Now using $z_i=u_i-d_i,U_G=\sum_{i\in G}u_i$, we have 
\begin{align}
\sum_{i\in G}(z_i^G-z_i^\wf)
&=
\sum_{i\in G}z_i^G+\sum_{i\in G}d_i^\wf-U_G 
\le
\frac{c}{\wlo}
+\frac{F_G^w-c}{\wlo}
-U_G
=
\frac{F_G^w}{\wlo}-U_G.
\end{align}

Also,
\begin{align}
\sum_{i\in G}(z_i^G-z_i^\wf)
=
U_G-\sum_{i\in G}d_i^G-\sum_{i\in G}z_i^\wf
\le
U_G-\frac{F_G^w-c}{\whi}-\frac{c}{\whi}
=
U_G-\frac{F_G^w}{\whi}.
\end{align}

Taking the weighted average with weights $\frac{\wlo}{\wlo+\whi}$ and $\frac{\whi}{\wlo+\whi}$, we get
\begin{align}
\sum_{i\in G}(z_i^G-z_i^\wf)
\le
\frac{\wlo}{\wlo+\whi}
\left(\frac{F_G^w}{\wlo}-U_G\right)
+
\frac{\whi}{\wlo+\whi}
\left(U_G-\frac{F_G^w}{\whi}\right)
=
U_G\frac{\whi-\wlo}{\whi+\wlo}\\
\Rightarrow
\Delta_G(c)
\le
\vrep U_G\frac{\whi-\wlo}{\whi+\wlo}
+\varepsilon_vU_G=
U_G\left(
\vrep\frac{\whi-\wlo}{\whi+\wlo}
+\varepsilon_v
\right) 
\end{align}

Finally, since \(u_i\le1\), $U_G=\sum_{i\in G}u_i\le |G|$, Thus,
\begin{gather}
\Delta_G(c)
\le
|G|\left(
\vrep\frac{\whi-\wlo}{\whi+\wlo}
+\varepsilon_v
\right)
\end{gather}
The term $\frac{\whi-\wlo}{\whi+\wlo}$ represents the minimax relative error associated with the harmonic mean across the interval $[\wlo,\whi]$~\cite{porta1987}. This term represents the penalty incurred for overlooking cost variations within a specific group, and it persists even when all items share identical values (i.e., $\varepsilon_v=0$).
\end{proof}
% \subsection{Tightness of the harmonic factor}\label{sec:tight}
\begin{theorem}[Tightness]\label{th:tight}
For every $k\ge2$ and $\eta>0$, there exists a group $G$ of $k$ items with $u_i\in[0,1],v_i=\vrep$, and a share $c$ such that
\begin{gather}
\Delta_G(c)
\ge
(1-\eta)\vrep U_G
\frac{\whi-\wlo}{\whi+\wlo}.
% \tag{\ref{eq:tight-statement}}
\end{gather}
Hence, for groups of $k$ items, the supremum of the ratio between the actual loss and the bound in Theorem~\ref{th:inner} is $1$. If, in addition, all ceilings are equal to $1$ and all values are identical, the supremum is $1-\frac1k$.
\end{theorem}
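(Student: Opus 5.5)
The plan is to exhibit an explicit family of groups: one (or several) cheap items with cost $\wlo$, the rest expensive with cost $\whi$, and all values equal to $\vrep$ so that $\varepsilon_v=0$. Then compute $\Delta_G(c)$ in closed form, using Theorem~\ref{th:characterise} for $z^\wf$ and Lemma~\ref{lem:price} for $z^G$. Since all values coincide, $z^G$ is the allocation maximising $\sum_i z_i$ under $\sum_i w_iz_i=c$, so it fills the cheapest items first. Consequently
\[
\Delta_G(c)=\vrep\Bigl(\sum_i z^G_i-\sum_i z^\wf_i\Bigr).
\]

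\textbf{Unequal ceilings.} Take item $0$ with $u_0=1$ and $w_0=\wlo$. Take $k-1$ items with $w_i=\whi$ and $u_i=b/(k-1)$. Choose the share $c=b(\wlo+\whi)$, which corresponds to the water level $\zeta=b/(k-1)$ at which every expensive item is exactly saturated. Check that $\zeta\le u_0$ (true whenever $b\le k-1$), so $z^\wf_0=\zeta$. Now impose $b\le \wlo/(\wlo+\whi)$. Then $c\le\wlo$, so the optimum spends the whole share on item $0$, giving $\sum_i z^G_i=c/\wlo$. The water-filling allocation gives $\sum_i z^\wf_i=\zeta+b$.

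I will redo this arithmetic carefully. The $k=2$ computation, where $\zeta=b$ and $c=b(\wlo+\whi)$, gives
\[
\Delta_G=\vrep\,b\,\frac{\whi-\wlo}{\wlo}
\quad\text{and}\quad
\frac{\Delta_G}{\vrep U_G\frac{\whi-\wlo}{\whi+\wlo}}=\frac{b(\whi+\wlo)}{\wlo(1+b)}.
\]
At $b=\wlo/(\wlo+\whi)$ this ratio equals $1/(1+b)$. For general $k$ I expect the extra term $\zeta$ in $\sum_i z^\wf_i$ to be absorbed by choosing the expensive ceilings appropriately, possibly with an adjusted $c$. Sending $\whi/\wlo\to\infty$ makes $b\to0$, so the ratio tends to $1$. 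Choosing $\whi/\wlo$ large enough for the given $\eta$ then yields the first claim. Combined with Theorem~\ref{th:inner}, this shows the supremum of the ratio is exactly $1$.

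\textbf{Equal ceilings $u_i=1$.} Take $j$ cheap items and $k-j$ expensive ones. Water-filling gives equal shares $\zeta$, so
\[
\sum_i z^\wf_i=k\zeta,\qquad c=\zeta\bigl(j\wlo+(k-j)\whi\bigr).
\]
The optimum fills the cheap items first. Compute the ratio as a function of $(j,\zeta,\whi/\wlo)$, then maximise. I expect $j=1$ and $\zeta=\wlo/(\wlo+(k-1)\whi)$ (the level at which the cheap item is exactly filled by the optimum) to approach $1-\frac1k$ as $\whi/\wlo\to\infty$. This gives the lower bound for the supremum.

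\textbf{Main obstacle.} The hard part is the matching upper bound $1-\frac1k$ in the equal-ceiling case, since Theorem~\ref{th:inner} only gives $1$. Here $U_G=k$ and $z^\wf_i=\zeta$ for all $i$. I would write $\sum_i z^G_i$ as a concave piecewise-linear function of $c$, with slopes $1/w_{(r)}$ in increasing-cost order, and compare it with the linear function
\[
\sum_i z^\wf_i=\frac{kc}{\sum_i w_i}.
\]
The maximum gap is attained at a breakpoint, that is, after fully funding some number $j$ of the cheapest items. At such a breakpoint the gap is
\[
j-\frac{k\sum_{r<j}w_{(r)}}{\sum_i w_i}.
\]
Bounding this by $k\bigl(1-\tfrac1k\bigr)\frac{\whi-\wlo}{\whi+\wlo}$ reduces to an inequality over costs in $[\wlo,\whi]$. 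That inequality should be extremal when the costs take only the values $\wlo$ and $\whi$, and I would verify it by direct two-point optimisation.
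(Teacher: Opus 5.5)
Your overall plan follows the paper's route. It uses a cheap/expensive instance with unequal ceilings for the supremum $1$, a unit-ceiling cheap/expensive family for the lower bound $1-\frac1k$, and a breakpoint comparison of the concave piecewise-linear $\sum_i z^G_i$ with the linear $kc/F$ for the upper bound.

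\textbf{The gap.} The step that fails is the extremal instance you name for the unit-ceiling lower bound. Take $j=1$ cheap item, $k-1$ expensive ones and $c=\wlo$. Then $\Delta_G(c)/\vrep=1-\frac{k\wlo}{\wlo+(k-1)\whi}=\frac{(k-1)(\whi-\wlo)}{\wlo+(k-1)\whi}$. Its ratio to $\vrep k\frac{\whi-\wlo}{\whi+\wlo}$ is $\frac{(k-1)(\whi+\wlo)}{k(\wlo+(k-1)\whi)}$, which tends to $\frac1k$ as $\whi/\wlo\to\infty$. That equals $1-\frac1k$ only when $k=2$.

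More generally, at the breakpoint where $j$ cheap items are filled, the ratio is $\frac{j(k-j)(\whi+\wlo)}{k(j\wlo+(k-j)\whi)}\to\frac jk$. The loss is large when \emph{many} cheap items are held down by a water level they share with a single costly outlier, not the reverse. The correct choice is $j=k-1$ with $c=(k-1)\wlo$. This is exactly the paper's family, and it gives $\frac{k-1}{k}\cdot\frac{\whi+\wlo}{(k-1)\wlo+\whi}\to1-\frac1k$.

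\textbf{Two smaller points.} First, in the unequal-ceiling construction with $k\ge3$, the share $c=b(\wlo+\whi)$ does not correspond to the level $b/(k-1)$: at that level $W_G=\wlo b/(k-1)+\whi b<c$. The actual level is $\zeta=b$, with the expensive items saturated at total capacity $b$ and cost $\whi b$. So $\sum_i z^\wf_i=2b$ and $U_G=1+b$, exactly as for $k=2$. Your $k=2$ computation therefore carries over verbatim and needs no adjustment. The paper instead pads with $k-2$ items of ceiling $0$, which is equally easy.

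Second, for the upper bound, write $S_j$ for the cost of the $j$ cheapest items and $F=\sum_i w_i$. Your reduction to two cost values is the paper's step $S_j\ge j\wlo$, $F-S_j\le(k-j)\whi$. To close the argument you still need $j(k-j)(\wlo+\whi)\le(k-1)\bigl(j\wlo+(k-j)\whi\bigr)$ for $1\le j\le k-1$. This follows from the identity $(k-1)\bigl(j\wlo+(k-j)\whi\bigr)-j(k-j)(\wlo+\whi)=j(j-1)\wlo+(k-j)(k-j-1)\whi\ge0$.
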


\begin{proof}
Fix $r>1$ and first take $k=2$. Let $v=(1,1)$, $w=(1,r)$, $u=\left(1,\frac1r\right)$,$c=1$. Then $\wlo=1$, $\whi=r$, $U_G=1+\frac1r$, $\vrep=1$. Thus, $\rho_0=1$, $\rho_1=\frac1r\Rightarrow\rho_0>\rho_1\because r>1$. Therefore, the optimal allocation is,
\begin{gather}
z^G=(1,0),
\sum_i z_i^G=1\nonumber 
\end{gather}
For the water-filling allocation of Theorem~\ref{th:characterise}, the level $\zeta$ satisfies
\begin{gather}
\min\{1,\zeta\}
+r\min\left\{\frac1r,\zeta\right\}=1 \Rightarrow\zeta=\frac{1}{1+r}\because \frac{1}{1+r}<\frac{1}{r} \nonumber \\
\Rightarrow z^\wf=(\zeta,\zeta) \Rightarrow \sum_i z_i^\wf=\frac{2}{1+r} \nonumber \\
\Rightarrow\Delta_G(1)=1-\frac{2}{1+r}=\frac{r-1}{r+1} \nonumber \\
\Rightarrow\frac{\Delta_G(1)}
{U_G\frac{\whi-\wlo}{\whi+\wlo}}
=\frac{\frac{r-1}{r+1}}{(1+\frac{1}{r})\frac{r-1}{r+1}}=\frac{r}{r+1}
\end{gather}
Since $\frac{r}{r+1}\ge1-\eta$ as $r\ge\frac1\eta$, the ratio can be made arbitrarily close to $1$. For $k>2$, add $k-2$ items with $u_i=0$, $w_i=1$ and $v_i=\vrep=1$. These items receive zero allocation, and since $\wlo=1$ and $\whi=r$ are already attained by the first two items, $\Delta_G(c)$, $U_G$,$\wlo$,$\whi$ remain unchanged. Thus the same lower bound holds for every $k\ge2$.

% \paragraph{Unit ceilings: lower bound.}
Now let $u_i=1$, $v_i=\vrep,\forall i$. Take $k-1$ items with cost $1$ and one item with cost $r>1$, $w=(1,\ldots,1,r)$,$c=k-1$. The optimal allocation fills all $k-1$ "cheap" items: $\sum_i z_i^G=k-1$. For allocation,
\begin{gather}
(k-1)\zeta+r\zeta=k-1\Rightarrow
\zeta=\frac{k-1}{k-1+r}<1\ 
\because r>1 \\
\Rightarrow
\sum_i z_i^\wf=k\zeta=\frac{k(k-1)}{k-1+r}\nonumber\\
\Rightarrow
\frac{\Delta_G(c)}{\vrep}=
(k-1)-\frac{k(k-1)}{k-1+r}=\frac{(k-1)(r-1)}{k-1+r}
\end{gather}
Here,$U_G=k,\wlo=1,\whi=r$, so the bound is $\vrep k\frac{r-1}{r+1}$. Thus the ratio is $\frac{\Delta_G(c)}{\vrep k(r-1)/(r+1)}=\frac{k-1}{k}\frac{r+1}{k-1+r}$. Taking $r\to\infty$ gives $\lim_{r\to\infty}\frac{k-1}{k}\frac{r+1}{k-1+r}=1-\frac1k$. Therefore,
\[
\sup\frac{\Delta_G(c)}
{\vrep U_G\frac{\whi-\wlo}{\whi+\wlo}}
\ge
1-\frac1k.
\]
% \paragraph{Unit ceilings: upper bound.}
Now we assume, $\forall i,u_i=1$, $v_i=\vrep$. We sort the items so that $w_{(1)}\le\cdots\le w_{(k)}$. We now define $S_j=\sum_{i=1}^j w_{(i)}$,$F=S_k=\sum_{i=1}^k w_{(i)}$. For $0\le\zeta\le1$,$W_G(\zeta)=F\zeta$ as $u_i=1$. Thus
$
\zeta=\frac{c}{F}$, and
$
\sum_i z_i^\wf
=
k\frac{c}{F}
$. Let,
$
M(c)=\sum_i z_i^G.
$.
By the greedy rule, items are filled in increasing order of cost. Hence $\forall j\in[0,k],M(S_j)=j$. Moreover, $M(c)$ is linear between consecutive points $S_j$ and $S_{j+1}$. Therefore,
$
\Delta_G(c)
=
\vrep\left(M(c)-\frac{kc}{F}\right)
$ is piecewise linear, so its maximum is attained at some breakpoint $c=S_j$. For $j=0$ and $j=k$, $\Delta_G(S_j)=0$. Thus consider $1\le j\le k-1$. We have
\[
\frac{\Delta_G(S_j)}{\vrep}
=
j-\frac{kS_j}{F}
=
j-\frac{kS_j}{S_j+(F-S_j)}.
\]
Since $S_j\ge j\wlo$, and $F-S_j\le(k-j)\whi$, we obtain
\begin{gather}
\frac{S_j}{S_j+(F-S_j)}
\ge
\frac{j\wlo}{j\wlo+(k-j)\whi},
\because
\frac{x}{x+y}
\text{ increases with }x
\text{ and decreases with }y \nonumber \\
\Rightarrow
\frac{\Delta_G(S_j)}{\vrep}
\le
j-\frac{kj\wlo}
{j\wlo+(k-j)\whi}=
\frac{j(k-j)(\whi-\wlo)}
{j\wlo+(k-j)\whi} \nonumber \\
\Rightarrow
(k-1)\bigl(j\wlo+(k-j)\whi\bigr)
-j(k-j)(\wlo+\whi)=
j(j-1)\wlo
+(k-j)(k-j-1)\whi
\ge0\nonumber \\
\because
1\le j\le k-1,
\wlo>0,
\whi>0 \nonumber \\
\Rightarrow
\frac{j(k-j)}
{j\wlo+(k-j)\whi}
\le
\frac{k-1}{\wlo+\whi} \nonumber \\
\Rightarrow
\frac{\Delta_G(S_j)}{\vrep}
\le
(k-1)
\frac{\whi-\wlo}{\whi+\wlo}.
\end{gather}
Since $U_G=k$ as $u_i=1$, we conclude that
\begin{gather}
\Delta_G(c)
\le
\left(1-\frac1k\right)
\vrep U_G
\frac{\whi-\wlo}{\whi+\wlo} \nonumber \\
\Rightarrow
\sup
\frac{\Delta_G(c)}
{\vrep U_G(\whi-\wlo)/(\whi+\wlo)}
=
1-\frac1k
\end{gather}
For general ceilings, the first construction gives a ratio arbitrarily close to $1$, and therefore $\boxed{\sup=1}$.
\end{proof}

% \noindent
% The setup involving unit ceilings is illustrated in Figure~\ref{fig:gap}. Here, the allocation mechanism forces the costly outlier up to the shared level $\zeta$, thereby depriving the less expensive items of funding. In Theorem~\ref{th:inner}, the local denominator term $\whi+\wlo$ plays a crucial role. If one were to substitute it with the global term $2\wmin$, the asymptotic order shown in~\eqref{eq:inner-bound-delta} remains unchanged. However, for the specific family described above where $\wmin=1$, this substitution would evaluate to $k(r-1)/2=\Theta(r)$. This creates a contradiction, as the total loss is strictly bounded by the group capacity, which is only $k-1$.

% \input{diagrams/water_fill}

% \subsection{Boundary localisation}\label{sec:localisation}
\noindent
If we sum the result of Theorem~\ref{th:inner} across every group, the overall error $E(C)$ is bounded by $\mathcal{O}(n\delta)$. However, this estimate proves far too pessimistic. The precise optimal solution completely allocated to any item situated above the clearing ratio and provides nothing to those below it. Therefore, provided the grouping aligns with the true ratio ordering, the only possible discrepancies between the two allocations occur within the single group where the budget is ultimately exhausted.

\begin{definition}[Order-compatibility]\label{def:ordercomp}
We call a metric grouping $\Gc$, which is indexed to satisfy $\rhorep_0\ge\dots\ge\rhorep_{m-1}$, \emph{order-compatible} whenever $\min_{i\in G_k}\rho_i\ge\max_{l\in G_{k'}}\rho_l,\forall k<k'$ meaning that the ratio intervals corresponding to the groups exhibit disjoint interiors and strictly decrease as the group index increases.
\end{definition}

\begin{lemma}[Resolution margin]\label{lem:margin}
Given Assumption~\ref{as:data}, the function mapping $\rho(a)\coloneqq \frac{v(a)}{w(a)}$ exhibits Lipschitz continuity characterized by the constant
\begin{equation}
    L_\rho\ \le\ \frac{L_v}{\wmin}+\frac{v_{\max}L_w}{\wmin^2},
    \label{eq:lip-rho}
\end{equation}
thereby ensuring that $\rho_i\in[\rhorep_k-\delta L_\rho,\ \rhorep_k+\delta L_\rho]$ holds true for all $i\in G_k$. Should the condition $\rhorep_k-\rhorep_{k+1}\ >\ 2\delta L_\rho\forall k\in[m-1]$, be met, the grouping $\Gc$ becomes order-compatible.
\end{lemma}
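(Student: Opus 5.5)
The proof has three steps: bound $L_\rho$ by a quotient-rule estimate, use it to place every item's ratio near its group's representative ratio, and then compare the ratio intervals of different groups using the separation hypothesis.

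\textbf{Step 1: the Lipschitz constant.} Fix $a,b\in A$ and split the difference with an add-and-subtract:
\[
\rho(a)-\rho(b)=\frac{v(a)-v(b)}{w(a)}+v(b)\Bigl(\frac{1}{w(a)}-\frac{1}{w(b)}\Bigr)
=\frac{v(a)-v(b)}{w(a)}+v(b)\,\frac{w(b)-w(a)}{w(a)w(b)}.
\]
Assumption~\ref{as:data} gives $w\ge\wmin$, $0<v\le v_{\max}$, and the Lipschitz constants $L_v,L_w$. Applying these to each term yields
\[
|\rho(a)-\rho(b)|\le\Bigl(\frac{L_v}{\wmin}+\frac{v_{\max}L_w}{\wmin^2}\Bigr)d(a,b),
\]
which is \eqref{eq:lip-rho}.

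\textbf{Step 2: the per-group ratio interval.} Let $i\in G_k$. By Definition~\ref{def:grouping}, $d(a_i,\widehat a_k)\le\delta$. We also have $\rho_i=\rho(a_i)$ and $\rhorep_k=\rho(\widehat a_k)$. Applying Step~1 with $a=a_i$, $b=\widehat a_k$ gives $|\rho_i-\rhorep_k|\le L_\rho\delta$.

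\textbf{Step 3: order-compatibility.} Take $k<k'$ and items $i\in G_k$, $l\in G_{k'}$. The consecutive separations telescope, since each gap $\rhorep_j-\rhorep_{j+1}$ is positive:
\[
\rhorep_k-\rhorep_{k'}=\sum_{j=k}^{k'-1}(\rhorep_j-\rhorep_{j+1})>2\delta L_\rho .
\]
Combining this with Step~2,
\[
\rho_i\ \ge\ \rhorep_k-\delta L_\rho\ >\ \rhorep_{k'}+\delta L_\rho\ \ge\ \rho_l .
\]
Taking the minimum over $i\in G_k$ and the maximum over $l\in G_{k'}$ (both sets are finite) gives $\min_{i\in G_k}\rho_i>\max_{l\in G_{k'}}\rho_l$. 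This is a strict version of Definition~\ref{def:ordercomp}. The same inequality $\rhorep_k>\rhorep_{k'}$ shows that the assumed indexing $\rhorep_0\ge\dots\ge\rhorep_{m-1}$ is consistent with the hypothesis.

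None of these steps is a real obstacle. The only points needing care are to bound $v(b)$ by $v_{\max}$ and to use the pointwise bound $w\ge\wmin$ on both $w(a)$ and $w(b)$ in the denominator $w(a)w(b)$. One should also note that the hypothesis indexed by $k\in[m-1]$ is meant as the consecutive gaps $k=0,\dots,m-2$, which is exactly what the telescoping in Step~3 requires.
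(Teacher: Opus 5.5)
Your proposal is correct and matches the paper's proof step for step: the same add-and-subtract split of $\rho(a)-\rho(b)$ bounded via $w\ge\wmin$ and $v\le v_{\max}$, the same application with $b=\widehat a_k$ to get $|\rho_i-\rhorep_k|\le\delta L_\rho$, and the same telescoping-plus-sandwich argument for order-compatibility. The only difference is your remark that the argument actually yields the strict inequality $\min_{i\in G_k}\rho_i>\max_{l\in G_{k'}}\rho_l$, which is a harmless strengthening of what Definition~\ref{def:ordercomp} requires.
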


\begin{proof}
Taking any $a,a'\in A$, we split $\rho(a)-\rho(a')=\frac{v(a)-v(a')}{w(a)}+v(a')\frac{w(a')-w(a)}{w(a)w(a')}$, whence
\[
    |\rho(a)-\rho(a')|\ \le\ \frac{|v(a)-v(a')|}{w(a)}+v(a')\,\frac{|w(a)-w(a')|}{w(a)\,w(a')}\ \le\ \Bigl(\frac{L_v}{\wmin}+\frac{v_{\max}L_w}{\wmin^2}\Bigr)d(a,a'),
\]
using $w(a),w(a')\ge\wmin$ and $v(a')\le v_{\max}$. This directly yields~\eqref{eq:lip-rho}.

\noindent
Applying this to $a=a_i$ and $a'=\widehat a_k$ for $i\in G_k$, Definition~\ref{def:grouping} gives $d(a_i,\widehat a_k)\le\delta$, so $|\rho_i-\rhorep_k|\le\delta L_\rho$, i.e. $\rho_i\in[\rhorep_k-\delta L_\rho,\rhorep_k+\delta L_\rho]$.

\noindent
Assume now $\rhorep_k-\rhorep_{k+1}>2\delta L_\rho$ for every $k\in[m-1]$. Fix $k<k'$. Because the groups are indexed so that $\rhorep_0\ge\dots\ge\rhorep_{m-1}$, telescoping over the consecutive gaps gives $\rhorep_k-\rhorep_{k'}\ \ge\ \rhorep_k-\rhorep_{k+1}\ >\ 2\delta L_\rho$. Hence, for any $i\in G_k$ and $l\in G_{k'}$,
\[
    \rho_i\ \ge\ \rhorep_k-\delta L_\rho\ >\ \rhorep_{k'}+\delta L_\rho\ \ge\ \rho_l .
\]
Taking the minimum over $i\in G_k$ and the maximum over $l\in G_{k'}$ yields $\min_{i\in G_k}\rho_i\ge\max_{l\in G_{k'}}\rho_l$ for all $k<k'$, which is precisely order-compatibility (Definition~\ref{def:ordercomp}).
\end{proof}

\begin{theorem}[Boundary localisation]\label{th:localise-master}
Suppose $\Gc$ is order-compatible and $C<F^w_{[n]}$. Then~\eqref{eq:P} admits an optimal solution $z^\star$ satisfying $z^\star_i=u_i$ for items in $\Gm^-$, $z^\star_i=0$ for those in $\Gm^+$, and $\sum_{i\in\Gm}w_iz^\star_i=c^\star$. As a direct consequence, the allocations $z^\star$ and $z^\gp$ are identical outside of $\Gm$, leaving us with
\begin{equation}
    E(C)\ =\ \Delta_\Gm(c^\star)\ \le\ U_\Gm\Bigl(\vrep_\Gm\,\frac{\whi_\Gm-\wlo_\Gm}{\whi_\Gm+\wlo_\Gm}+\varepsilon_v\Bigr).
    \label{eq:localised-error}
\end{equation}
Whenever $|G_k|\le K$ is satisfied for every $k$, it follows that
\begin{equation}
    \frac{E(C)}{n}\ \le\ \frac{K}{n}\Bigl(\frac{\vrep_\Gm L_w}{\wmin}+L_v\Bigr)\delta\ =\ \mathcal{O}\Bigl(\frac{K}{n}\Bigr),
    \label{eq:asymptotic-vanish}
\end{equation}
This bound applies uniformly across any family of instances where $\vrep_\Gm\le\bar V$ and $\wmin\ge\bar w>0$. Furthermore, the relative loss is constrained by
\begin{equation}
    \frac{E(C)}{V^\star(C)}\ \le\ \frac{U_\Gm}{C}\cdot\frac{\vrep_\Gm}{\nu^\star}\Bigl(\frac{\whi_\Gm-\wlo_\Gm}{\whi_\Gm+\wlo_\Gm}+\frac{\varepsilon_v}{\vrep_\Gm}\Bigr).
    \label{eq:relative-bound}
\end{equation}
\end{theorem}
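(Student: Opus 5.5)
The plan is to build $z^\star$ explicitly from a threshold argument via Lemma~\ref{lem:price}, then reduce $E(C)$ to the intra-group loss of the single group $\Gm$, and finally invoke Theorem~\ref{th:inner}.

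\emph{Constructing $z^\star$.} Order-compatibility (Definition~\ref{def:ordercomp}) lets us sort all items by $\rho_i$ non-increasingly so that the items of $G_0$ come first, then $G_1$, and so on. Ties between adjacent groups are broken in favour of the lower group index, which is possible because $\min_{G_k}\rho\ge\max_{G_{k'}}\rho$ for $k<k'$. We run the Dantzig construction $z^g$ from the proof of Lemma~\ref{lem:price} along this order. Since cumulative costs at group boundaries equal $F^w_{\le k}$, and $k^\star$ is the first group with $F^w_{\le k}>C$, the fractional index $p$ lies in $\Gm$. Consequently $z^g_i=u_i$ on $\Gm^-$, $z^g_i=0$ on $\Gm^+$, and $\sum_{i\in\Gm}w_iz^g_i=C-F^w_{\le k^\star-1}=c^\star$. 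We set $z^\star=z^g$; its optimality is already established in Lemma~\ref{lem:price}, since any sorted order, including ties broken arbitrarily, works there.

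\emph{Localisation.} Since $z^\star$ and $z^\gp$ agree on $\Gm^\pm$, we have $E(C)=\sum_{i\in\Gm}v_i(z^\star_i-z^\wf_i(c^\star))$. We then claim that $z^\star|_\Gm$ is optimal for the subproblem of allocating share $c^\star$ within $\Gm$. If some $y$ on $\Gm$ with the same spend did better, splicing $y$ into $z^\star$ would give a feasible solution beating $V^\star$, a contradiction. Hence $\sum_{\Gm}v_iz^\star_i=\sum_\Gm v_iz^\Gm_i$, so $E(C)=\Delta_\Gm(c^\star)$, and \eqref{eq:localised-error} follows from the first inequality of Theorem~\ref{th:inner}. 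Under $|G_k|\le K$, the last chain of \eqref{eq:inner-bound} gives $E(C)\le K(\vrep_\Gm L_w/\wmin+L_v)\delta$, which yields \eqref{eq:asymptotic-vanish} after dividing by $n$. Uniformity in the bound is immediate once $\vrep_\Gm\le\bar V$ and $\wmin\ge\bar w$ are substituted.

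\emph{Relative bound.} For \eqref{eq:relative-bound}, we divide \eqref{eq:localised-error} by $V^\star(C)$ and use $V^\star(C)\ge\nu^\star C$ from Lemma~\ref{lem:price}. We then factor $\vrep_\Gm$ out of the bracket. This step is pure algebra, valid for $C>0$; for $C=0$ the statement is vacuous or to be read with the convention $0/0$.

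\emph{Main obstacle.} The delicate point is the tie-breaking in the first step. If the ratio intervals of adjacent groups touch, for instance when $\min_{G_k}\rho=\max_{G_{k+1}}\rho=\nu$, the global greedy might interleave items across groups. We must check that the group-respecting order is still a valid non-increasing order, which order-compatibility guarantees. We must also check that $p\in\Gm$ even when $c^\star=0$. In that case $p$ is the first item of $\Gm$ with zero fill, which is still consistent with $\sum_\Gm w_iz^\star_i=0=c^\star$.
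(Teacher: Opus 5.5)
Your proposal is correct and follows essentially the same route as the paper. You run Dantzig's rule along a non-increasing ratio order with ties broken toward the smaller group index, note that the fractional item falls in $\Gamma$, reduce $E(C)$ to $\Delta_\Gamma(c^\star)$, and conclude with Theorem~\ref{th:inner} and $V^\star(C)\ge\nu^\star C$ from Lemma~\ref{lem:price}. Your splicing argument, showing that $z^\star$ restricted to $\Gamma$ is optimal for the residual share $c^\star$, makes explicit a step the paper only asserts (``both allocations solve the same residual-budget problem''), and your treatment of the $c^\star=0$ and $C=0$ edge cases is a harmless refinement.
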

\begin{proof}
We order the items by non-increasing density $\rho_i:=\frac{v_i}{w_i}$, breaking ties by prioritising the smaller group index. By the order-compatibility assumption, all items in $\mathcal G^-$ precede all items in $\mathcal G$, which in turn precede all items in $\mathcal G^+$. Hence, when Dantzig's rule is applied to this ordering, every item in $\mathcal G^-$ is fully saturated before any item in $\mathcal G^+$ can receive positive allocation. Since $C<F^w_{[n]}$, the budget constraint is binding. Moreover, by the definition of $k^\star$ and the fact that $F^w_{\le k^\star-1}\le C<F^w_{\le k^\star}$, the total cost required to saturate all items preceding the boundary group is $F^w_{\le k^\star-1}=F^w_{\mathcal G^-}\le C$. Consequently, after saturating $\mathcal G^-$, the residual budget is
\[
    c^\star:=C-F^w_{\mathcal G^-},\text{such that } 0\le c^\star<F^w_{\mathcal G}
\]
Dantzig's rule therefore allocates the residual amount $c^\star$ entirely among the items of $\mathcal G$ and assigns zero allocation to every item in $\mathcal G^+$. Thus, the resulting allocation satisfies
\[
    z_i^\star=\begin{cases}
        \forall i\in\mathcal G^-\ ,u_i,\\
        \forall i\in\mathcal G^+\ ,z_i^\star=0,\\
        \forall i\in\mathcal G\ ,\sum_{i\in\mathcal G}w_i z_i^\star=c^\star
    \end{cases}
\]
By Lemma~\ref{lem:price}, Dantzig's rule produces an optimal solution of \eqref{eq:P}. Hence this particular allocation may be taken as an optimal solution $z^\star$. It remains to compare $z^\star$ with the groupwise approximation $z^{\gp}$. Both allocations coincide on $\mathcal G^-$, where all items are saturated, and on $\mathcal G^+$, where all items receive zero allocation. On $\mathcal G$, both allocations solve the same residual-budget problem with budget $c^\star$. Therefore,
\begin{align}
    V^\star(C)&=\sum_{i\in\mathcal G^-}v_i u_i+\sum_{i\in\mathcal G}v_i z_i^\star,\ 
    V^{\gp}(C)=\sum_{i\in\mathcal G^-}v_i u_i+\sum_{i\in\mathcal G}v_i z_i^{\wf}(c^\star)\ \nonumber \\
    \Rightarrow E(C)&=V^\star(C)-V^{\gp}(C)=\Delta_{\mathcal G}(c^\star)\nonumber \\
    &\le
    U_{\mathcal G}
    \left(
        \vrep_{\mathcal G}
        \frac{\whi_{\mathcal G}-\wlo_{\mathcal G}}
             {\whi_{\mathcal G}+\wlo_{\mathcal G}}
        +\varepsilon_v
    \right)\because 0\le c^\star<F^w_{\mathcal G}\nonumber \\
    &\le K \left(
        \frac{\vrep_{\mathcal G}L_w}{\wmin}
        +L_v
    \right)\delta\ \because
    U_{\mathcal G}\le |\mathcal G|\le K\ \&\ 
    \frac{\whi_{\mathcal G}-\wlo_{\mathcal G}}{\whi_{\mathcal G}+\wlo_{\mathcal G}}
    \le
    \frac{\whi_{\mathcal G}-\wlo_{\mathcal G}}
         {2\wmin}
    \le
    \frac{L_w\delta}{\wmin},\ 
    \varepsilon_v\le L_v\delta \nonumber \\
\Rightarrow
 \frac{E(C)}{n}&\le\frac{K}{n}\left(\frac{\vrep_{\mathcal G}L_w}{\wmin}+L_v\right)\delta=\mathcal O\left(\frac Kn\right)
\end{align}
which proves \eqref{eq:localised-error}. Provided the quantities $\vrep_{\mathcal G}$ and $\wmin^{-1}$ are uniformly bounded. In particular, the bound is uniform over any family of instances satisfying $\vrep_{\mathcal G}\le\bar V$, $\wmin\ge\bar w>0$. Finally, Lemma~\ref{lem:price} gives the lower bound $V^\star(C)\ge \nu^\star C$. Combining this with \eqref{eq:localised-error} yields
\begin{align}
    \frac{E(C)}{V^\star(C)}
    \le
    \frac{U_{\mathcal G}}{\nu^\star C}
    \left(
        \vrep_{\mathcal G}
        \frac{\whi_{\mathcal G}-\wlo_{\mathcal G}}
             {\whi_{\mathcal G}+\wlo_{\mathcal G}}
        +\varepsilon_v
    \right)
    =
    \frac{U_{\mathcal G}}{C}
    \frac{\vrep_{\mathcal G}}{\nu^\star}
    \left(
        \frac{\whi_{\mathcal G}-\wlo_{\mathcal G}}
             {\whi_{\mathcal G}+\wlo_{\mathcal G}}
        +
        \frac{\varepsilon_v}{\vrep_{\mathcal G}}
    \right)\nonumber 
\end{align}
which is precisely \eqref{eq:relative-bound}. If the budget hits an allocation saturation threshold $C=F^w_{\le k}$, the remaining share becomes exactly $c^\star=0$. Under these conditions, both allocation methods fully fund groups $G_0,\dots, G_k$ and leave everything else empty, culminating in an error of $E(C)=0$.
\end{proof}

\begin{proposition}[]\label{prop:graded-overlap}
Consider any arbitrary metric grouping denoted by $\Gc$, sorted by descending representative ratios. Let us define $\omega\ \coloneqq\ \max_{k<k'}\Bigl(\max_{l\in G_{k'}}\rho_l-\min_{i\in G_k}\rho_i\Bigr)^{+}
    \label{eq:overlap-def}$
% \end{equation}
to represent the maximum possible ratio inversion occurring across the groups. Under this definition, for any budget $C<F^w_{[n]}$, the error is constrained by
\begin{equation}
    E(C)\ \le\ \Delta_\Gm(c^\star)+\omega C ,
    \label{eq:graded-bound}
\end{equation}
where Theorem~\ref{th:inner} provides the bound for $\Delta_\Gm(c^\star)$. If we set $\omega=0$, the grouping achieves order compatibility, and the expression in~\eqref{eq:graded-bound} matches the bound established in Theorem~\ref{th:localise-master}.
\end{proposition}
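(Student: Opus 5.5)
Let $\nu^\star$ be the clearing price and $z^\star$ an optimal solution from Lemma~\ref{lem:price}. The plan is to compare $z^\star$ with a hybrid allocation $\tilde z$ that agrees with $z^\gp$ outside $\Gm$ and uses an optimal split $z^\Gm$ of $c^\star$ inside $\Gm$. Then
\[
E(C)=\bigl(V^\star(C)-\textstyle\sum_i v_i\tilde z_i\bigr)+\Delta_\Gm(c^\star),
\]
so it suffices to show $V^\star(C)-\sum_iv_i\tilde z_i\le\omega C$. Both $z^\star$ and $\tilde z$ are feasible and both spend exactly $C$: for $\tilde z$ because $F^w_{\Gm^-}+c^\star=C$, and for $z^\star$ by Lemma~\ref{lem:price}.

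To bound this first term, write $D\coloneqq z^\star-\tilde z$ and use $v_i=\rho_iw_i$. The budget equality gives $\sum_iw_iD_i=0$. On $\Gm^-$ we have $D_i\le0$, and on $\Gm^+$ we have $D_i\ge0$. Inside $\Gm$ the sign is mixed, but the optimality of $z^\Gm$ for budget $c^\star$ means that moving budget around inside $\Gm$ cannot help. So the only gain of $z^\star$ over $\tilde z$ comes from moving cost-mass from $\Gm^-$ (and possibly from $\Gm$) into $\Gm^+$ (and possibly into $\Gm$).

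Concretely, I would split $D$ into the total mass shifted out, $T\coloneqq\sum_{i\in\Gm^-}w_i(-D_i)+(\text{net outflow from }\Gm)$, and the matching inflow into later groups. The value change is then $\sum_i\rho_iw_iD_i$, which is at most $T$ times the difference between the largest ratio among recipients and the smallest ratio among donors. A donor lies in some $G_k$ and a recipient in some $G_{k'}$ with $k<k'$. By the definition of $\omega$, the recipient ratio exceeds the donor ratio by at most $\omega$. Hence the gain is at most $\omega T\le\omega C$, since the total cost moved cannot exceed the budget $C$.

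The hard step is handling $\Gm$ correctly, because inside $\Gm$ both the $\tilde z$ and $z^\star$ allocations are fractional. I would avoid the sign issue by rewriting the chain with the Lagrangian at a per-group level, as follows. Transport the mass through a coupling between the negative parts and the positive parts of $w\odot D$. Pairs inside the same group contribute $\le0$ by optimality of $z^\Gm$ (or of the saturation pattern on the other groups). Pairs across groups $k<k'$ contribute at most $\omega$ per unit of cost. A pair moving mass from a later group to an earlier one contributes at most $0$ when the donor's ratio is below the recipient's. When it is not, the pair should be reversed in the coupling; this reversal is the delicate point to check. If $\omega=0$, the argument gives $E(C)=\Delta_\Gm(c^\star)$, which recovers Theorem~\ref{th:localise-master}.
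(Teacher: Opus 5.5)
Your argument is correct and is essentially the paper's. It uses the same hybrid $\tilde z$ and the same reduction to $V^\star(C)-\sum_iv_i\tilde z_i\le\omega C$. Cross-group pairs are bounded by $\omega$, and same-group pairs in $\Gm$ are bounded by $0$, since a donor $i\in\Gm$ has $\tilde z_i>0$ and a recipient $j\in\Gm$ has $\tilde z_j<u_j$, which forces $\rho_j\le\rho_i$ by optimality of the subproblem. The total moved cost satisfies $\Sigma=\sum_{D_j>0}w_jD_j\le C$. The only difference is bookkeeping: the paper organises the pairing through $\nu^\star$, showing each term of $\sum_iw_i(\rho_i-\nu^\star)D_i$ is nonnegative and splitting over $P$ and $N$, whereas you use a coupling; both give the same bound $\Sigma\max(\rho_{\text{recipient}}-\rho_{\text{donor}})$.

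The ``delicate'' reversal step you flag is vacuous, so nothing is left to check. You already have $D\le0$ on $\Gm^-$ and $D\ge0$ on $\Gm^+$, so every donor lies in a group of index $\le k^\star$ and every recipient in a group of index $\ge k^\star$. Hence no later-to-earlier pair exists, and any coupling works, for example the product coupling $\pi_{ij}=n_ip_j/\Sigma$.
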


\begin{proof}
We construct an allocation $\tilde z$ that matches $z^\gp$ outside $\Gm$ and acts as an optimal distribution of the share $c^\star$ strictly within $\Gm$. This construction gives $\sum_iv_i(\tilde z_i-z^\gp_i)=\Delta_\Gm(c^\star)$. Consequently, we only need to demonstrate that $V^\star(C)-\sum_iv_i\tilde z_i\le\omega C$. Suppose $z^\star$ represents an optimal allocation characterised by the multiplier $\nu^\star$, consistent with Lemma~\ref{lem:price}. Because both $z^\star$ and $\tilde z$ exhaust the exact budget $C$, we find
\[
    V^\star(C)-\sum_iv_i\tilde z_i\ =\ \sum_i(v_i-\nu^\star w_i)(z^\star_i-\tilde z_i)\ =\ \sum_iw_i(\rho_i-\nu^\star)(z^\star_i-\tilde z_i)=T .
\]
Let us introduce the sets $P\coloneqq\{i:z^\star_i>\tilde z_i\}$ and $N\coloneqq\{i:z^\star_i<\tilde z_i\}$. According to condition~\eqref{eq:bang-bang}, the fact that $z^\star_i>0$ requires $\rho_i\ge\nu^\star$ for every item $i\in P$. Likewise, having $z^\star_i<u_i$ implies $\rho_i\le\nu^\star$ for any $i\in N$. Thus, every individual term contributing to $T$ must be non-negative. Because both allocations spend identical amounts, we deduce that $\Sigma\coloneqq\sum_{i\in P}w_i(z^\star_i-\tilde z_i)=\sum_{i\in N}w_i(\tilde z_i-z^\star_i)$. Furthermore, this total cannot exceed the budget, so $\Sigma\le\sum_iw_iz^\star_i=C$. In the event that $P=\emptyset$, we get $\Sigma=0$, which implies $N=\emptyset$ and forces $T=0$. In all other cases, we have
\begin{equation}
    T\ \le\ \Sigma\Bigl[\max_{i\in P}(\rho_i-\nu^\star)+\max_{j\in N}(\nu^\star-\rho_j)\Bigr]\ \le\ C\max_{i\in P,\,j\in N}(\rho_i-\rho_j).
    \label{eq:T-bound}
\end{equation}
We pick any elements $i\in P$ and $j\in N$, assuming they belong to groups $G_k$ and $G_{k'}$ respectively. The inequality $\tilde z_i<z^\star_i\le u_i$ indicates that item $i$ does not belong to $\Gm^-$, thereby enforcing $k\ge k^\star$. Conversely, the relation $\tilde z_j>z^\star_j\ge0$ means item $j$ falls outside $\Gm^+$, leading to $k'\le k^\star$. Whenever $k>k'$, the difference satisfies $\rho_i-\rho_j\le\max_{l\in G_k}\rho_l-\min_{l\in G_{k'}}\rho_l\le\omega$, relying on~\eqref{eq:overlap-def}. If it happens that $k=k'=k^\star$, then both items are located within $\Gm$. Here, $\tilde z$ acts as the optimal allocation for the share $c^\star<F^w_\Gm$. Invoking~\eqref{eq:bang-bang} for this specific subproblem alongside its unique multiplier $\nu_\Gm$, the condition $\tilde z_i<u_i$ demands $\rho_i\le\nu_\Gm$, whereas $\tilde z_j>0$ forces $\rho_j\ge\nu_\Gm$. Together, these inequalities guarantee that $\rho_i-\rho_j\le0\le\omega$. Ultimately,~\eqref{eq:T-bound} concludes the proof by confirming $T\le\omega C$.
\end{proof}

% \subsection{Lipschitz stability}\label{sec:stability-sec}
\noindent
As Proposition~\ref{prop:eps} shows, the exact solution mapping lacks any finite Lipschitz modulus. However, the grouped allocation is Lipschitz continuous, provided that any parameter perturbation remains strictly within the separation margin between adjacent groups. For the remainder of this section, we assume the grouping $\Gc$ and its corresponding representatives remain constant, while individual item parameters may vary.

\begin{theorem}[Stability under a single-cost perturbation]
\label{th:stability}
Suppose the grouping $\mathcal G$ is order-compatible and has a positive margin
$
    \gamma
    \coloneqq
    \min_{k\in[m-1]}
    \bigl(\rhorep_k-\rhorep_{k+1}\bigr)
    -2\delta L_\rho
    >0
$.
Assume that $|G_k|\le K$ for every $k$, and that the budget has positive slack
$
    \beta
    \coloneqq
    \min_{k\in[m]}
    \bigl|C-F^w_{\le k}\bigr|
    >0
$.
Let the cost of an arbitrary item $j\in[n]$ be perturbed according to
\begin{gather}
    w'=w+\eta e_j\text{ where }
    |\eta|<
    \eta_0
    \coloneqq
    \min\left\{
        \frac{\gamma\wmin^2}
        {v_{\max}+\gamma\wmin},
        \beta
    \right\}.
\end{gather}
Then:
\begin{enumerate}[label=(\roman*)]
    \item the perturbed instance remains order-compatible with respect to $\mathcal G$; \label{th:stability_one}
    \item the boundary group $\mathcal G_m$ and the sets $\mathcal G^-$ and $\mathcal G^+$ remain unchanged. Consequently, $i\notin\mathcal G_m$, $z_i^{\gp}(w')=z_i^{\gp}(w)$; \label{th:stability_two}
    \item the groupwise allocation on the boundary group is Lipschitz in $\eta$:
        \[
            \bigl\|z^{\gp}(w')-z^{\gp}(w)\bigr\|_1
            \le
            \frac{|\mathcal G_m|}{\wmin}|\eta|
            \le
            \frac{K}{\wmin}|\eta|.
        \]\label{th:stability_three}
\end{enumerate}
In particular, the perturbation quotient is uniformly bounded by $\frac{K}{\wmin}$.
\end{theorem}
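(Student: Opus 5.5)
The proof will follow the three items in order, since each one relies on the previous.

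\emph{Item (i).} The representatives $\widehat a_k$ and the grouping are fixed, so $\rhorep_k$ does not change. Only $\rho_j$ moves, to $\rho'_j=v_j/(w_j+\eta)$. By Lemma~\ref{lem:margin}, every unperturbed ratio satisfies $|\rho_i-\rhorep_k|\le\delta L_\rho$, and adjacent representatives are separated by at least $2\delta L_\rho+\gamma$. It therefore suffices to show $|\rho'_j-\rho_j|<\gamma$. Then $\rho'_j$ stays inside the open band $\bigl(\rhorep_k-\delta L_\rho-\gamma,\ \rhorep_k+\delta L_\rho+\gamma\bigr)$, which does not reach any point of an adjacent group's interval, and order-compatibility follows exactly as in the last display of the proof of Lemma~\ref{lem:margin}. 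The estimate is
\[
|\rho'_j-\rho_j|=\frac{v_j|\eta|}{w_j(w_j+\eta)}\le\frac{v_{\max}|\eta|}{\wmin(\wmin-|\eta|)} .
\]
Solving $\frac{v_{\max}|\eta|}{\wmin(\wmin-|\eta|)}<\gamma$ for $|\eta|$ gives precisely $|\eta|<\gamma\wmin^2/(v_{\max}+\gamma\wmin)$, which is the first term of $\eta_0$. The same bound also gives $|\eta|<\wmin$, so $w'_j>0$.

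\emph{Item (ii).} Only $F^w_{G_k}$ for the group $G_k\ni j$ changes, and it changes by $\eta u_j$, with $|\eta u_j|\le|\eta|<\beta$ because $u_j\le1$. Each prefix sum $F^w_{\le k}$ therefore moves by less than $\beta\le|C-F^w_{\le k}|$, so no prefix sum crosses $C$. The strict inequalities $F^w_{\le k}>C$ and $F^w_{\le k}<C$ are preserved; equality is excluded by $\beta>0$. Hence $k^\star$ from~\eqref{eq:kstar} is unchanged, and so are $\Gm$, $\Gm^-$ and $\Gm^+$. Outside $\Gm$ the allocation is $u_i$ or $0$, which does not depend on $w$.

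\emph{Item (iii).} This is the main obstacle. On $\Gm$ both $c^\star$ and the cost vector may change: $c^\star$ changes by $-\eta u_j$ if $j\in\Gm^-$, and $W_\Gm$ changes if $j\in\Gm$. If $j\in\Gm^+$, nothing changes. In every case, write $\zeta,\zeta'$ for the two water levels from Theorem~\ref{th:characterise}. Since $z_i=\min\{u_i,\zeta\}$ is $1$-Lipschitz in $\zeta$, we get $\|z'-z\|_1\le|\Gm|\,|\zeta'-\zeta|$, so it is enough to prove $|\zeta'-\zeta|\le|\eta|/\wmin$.

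For this I would use the strict monotonicity slope of $W_\Gm$. On every segment $[u_{(\ell-1)},u_{(\ell)}]$ its slope is $S^w_\ell\ge w_{(q-1)}\ge\wmin$, since the item with maximal ceiling is always active. Hence $|W(\zeta')-W(\zeta)|\ge\wmin|\zeta'-\zeta|$ for any fixed cost vector.

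\emph{Case $j\in\Gm^-$.} Here $W$ is fixed and the target changes by $|\eta u_j|\le|\eta|$, which gives the bound directly.

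\emph{Case $j\in\Gm$.} We have $W'(\zeta)=W(\zeta)+\eta\min\{u_j,\zeta\}$ and $W'(\zeta')=c^\star=W(\zeta)$. Therefore
\[
W'(\zeta')-W'(\zeta)=-\eta\min\{u_j,\zeta\},
\]
with absolute value at most $|\eta|$ because $\zeta\le1$. The slope bound for $W'$ (whose costs are still at least $\wmin-|\eta|$) then gives the result. The subtle point is that $W'$ has costs only $\ge\wmin-|\eta|$. To avoid this loss, I would apply the slope bound to whichever of $W$, $W'$ has the unperturbed $w_j$ on the relevant side, or note that $\min\{w_j,w_j+\eta\}$ enters only through item $j$ while the maximal-ceiling slope argument can use an item $\ne j$ when $|\Gm|\ge2$. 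If this fails, I would accept the constant $|\Gm|/(\wmin-|\eta|)$ and check carefully whether the stated constant $K/\wmin$ needs exactly this refinement. Finally, $|\Gm|\le K$ gives the last inequality.
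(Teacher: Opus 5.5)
Your parts (i) and (ii), and the cases $j\in\Gm^+$ and $j\in\Gm^-$ of (iii), are correct and follow the paper's argument. The gap is in case $j\in\Gm$ of (iii). There you bound $W'(\zeta')-W'(\zeta)$ and then use the slope of the \emph{perturbed} function $W'$. That only gives $|\zeta'-\zeta|\le|\eta|/(\wmin-|\eta|)$, not the claimed $|\eta|/\wmin$.

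Your second repair, using a maximal-ceiling item other than $j$, genuinely fails. Suppose $j$ is the unique item of $\Gm$ with $u_j=\max_{i\in\Gm}u_i$, for instance when $|\Gm|=1$. Then on the top segment only $j$ is active, and the slope of $W'$ there is $w_j+\eta$. With $\eta<0$ and $w_j=\wmin$, this is strictly below $\wmin$. Falling back to the constant $|\Gm|/(\wmin-|\eta|)$ would not prove the stated theorem.

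Your first suggested repair points in the right direction but is never carried out. The missing step, which is what the paper does, is to evaluate the \emph{unperturbed} function at the \emph{perturbed} level. Since $W'(\zeta')=c^\star=W(\zeta)$ and $W'(\xi)=W(\xi)+\eta\min\{u_j,\xi\}$, you get
\[
W(\zeta')-W(\zeta)\;=\;W(\zeta')-W'(\zeta')\;=\;-\eta\min\{u_j,\zeta'\},
\]
so $|W(\zeta')-W(\zeta)|\le|\eta|$. Now the slope bound applies to $W$, whose costs are the original ones and hence all at least $\wmin$. This gives $|\zeta'-\zeta|\le|\eta|/\wmin$ with no loss.

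The paper treats $j\in\Gm^-$ and $j\in\Gm$ uniformly this way. In both cases it shows $|W(\zeta';w)-c^\star|\le|\eta|$ and uses only the slope of the unperturbed $W$ on $[0,\max_{i\in\Gm}u_i]$. Both levels lie in that interval because (ii) keeps $c^{\star\prime}$ within the boundary group's range.
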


\begin{proof}
Since
$\rho_i=\frac{v_i}{w_i}$ and $\rho_j'=\frac{v_j}{w_j+\eta}$, 
we have
\begin{gather}
    |\rho_j'-\rho_j|
    =
    \frac{v_j|\eta|}
         {w_j|w_j+\eta|}
    \le
    \frac{v_{\max}|\eta|}
         {\wmin(\wmin-|\eta|)}.
\end{gather}
The definition of $\eta_0$ implies $|\eta|<\wmin$ and hence
$
    |\rho_j'-\rho_j|<\gamma.
$.
Suppose $j\in G_k$. By the margin assumption,
$
    \rho_j\ge \rhorep_k-\delta L_\rho
$.
Therefore,
\begin{gather}
    \rho_j'
    >
    \rhorep_k-\delta L_\rho-\gamma
    \ge
    \rhorep_{k+1}+\delta L_\rho
\end{gather}
By Lemma~\ref{lem:margin}, every item in $G_{k+1}$ has density at most
$\rhorep_{k+1}+\delta L_\rho$. Hence
$
    \rho_j'>
    \max_{i\in G_{k+1}}\rho_i
$.
The same argument, applied to the preceding group, gives
$
    \rho_j'<
    \min_{i\in G_{k-1}}\rho_i
$,
whenever $k\ge1$. The bound extends from the adjacent groups to all of them: for any $k'>k+1$ the ordering $\rhorep_{k'}\le\rhorep_{k+1}$ gives $\max_{l\in G_{k'}}\rho_l\le\rhorep_{k'}+\delta L_\rho\le\rhorep_{k+1}+\delta L_\rho<\rho_j'$, and symmetrically $\rho_j'<\min_{l\in G_{k'}}\rho_l$ for every $k'<k-1$. Thus the perturbed density remains inside the band separating its own group from every other one. Since all other densities are unchanged and already satisfy Definition~\ref{def:ordercomp}, every pair of groups still has the required ordering. Hence the perturbed instance
remains order-compatible with respect to $\mathcal G$. \boxed{Proved \ref{th:stability_one}}

\noindent
Now, for every prefix $G_1\cup\cdots\cup G_k$, the perturbation changes its
total cost by either $0$ or $\eta u_j$. Consequently,
$
    \bigl|F^{w'}_{\le k}-F^w_{\le k}\bigr|
    \le |\eta|
    <\beta.
$.
By definition of $\beta$,
$
    |F^w_{\le k}-C|\ge\beta,\forall k
$
Hence the perturbation is too small to move any prefix
cost across the budget $C$. The sign of
$
    F^w_{\le k}-C
$
is therefore the same before and after the perturbation. It follows that the index $k^\star$ satisfying
$
    F^w_{\le k^\star-1}\le C<F^w_{\le k^\star}
$
is unchanged. Thus the boundary group $\mathcal G_m=G_{k^\star}$,
together with $\mathcal G^-$ and $\mathcal G^+$, is unchanged. By Theorem~\ref{th:localise-master}, all items in $\mathcal G^-$ remain
fully allocated and all items in $\mathcal G^+$ remain unallocated.
Therefore $i\notin\mathcal G_m$
$
    z_i^{\gp}(w')=z_i^{\gp}(w)
$ \boxed{Proved \ref{th:stability_two}}

Let $c^\star=C-F^w_{\le k^\star-1}$,$c^{\star\prime}=C-F^{w'}_{\le k^\star-1}$. By part \ref{th:stability_two}, the boundary group does not change. The groupwise
allocation on $\mathcal G_m$ is obtained from an allocation budget. We define
$
    W(\xi;w)
    \coloneqq
    \sum_{i\in\mathcal G_m}
    w_i\min\{u_i,\xi\}.
$
Let $\zeta$ and $\zeta'$ denote the corresponding allocations, so that
$W(\zeta;w)=c^\star$, and $W(\zeta';w')=c^{\star\prime}$.
Since only item $j$ is perturbed, we have
\[
    |W(\zeta';w)-W(\zeta';w')|
    \le |\eta|
\]
Also,$|c^{\star\prime}-c^\star|\le |\eta|$. However, these two changes cannot occur simultaneously: the first is
non-zero only when $j\in\mathcal G_m$, while the second is non-zero only
when $j\in\mathcal G^-$. Hence
\[
    |W(\zeta';w)-c^\star|
    \le |\eta|.
\]
Let $i^\dagger\in\mathcal G_m$ be an item with
$
    u_{i^\dagger}
    =
    \max_{i\in\mathcal G_m}u_i
$.
For every $\xi$ between the two allocations, this item is not yet saturated, except possibly at the endpoint. Therefore $W(\xi;w)$ has slope at least
$
    w_{i^\dagger}\ge\wmin
$
between $\zeta$ and $\zeta'$. Consequently,
$
    |\zeta'-\zeta|
    \le
    \frac{|\eta|}{\wmin}
$. Finally,
$
    z_i^{\gp}(w)=\min\{u_i,\zeta\},
$
and the map $\xi\mapsto\min{u_i,\xi}$ is $1$-Lipschitz. Hence
$$
\begin{aligned}
    \bigl\|z^{\gp}(w')-z^{\gp}(w)\bigr\|_1
    &=
    \sum_{i\in\mathcal G_m}
    \left|
        \min\{u_i,\zeta'\}
        -
        \min\{u_i,\zeta\}
    \right|\\
    &\le
    |\mathcal G_m|\,|\zeta'-\zeta|\le
    \frac{|\mathcal G_m|}{\wmin}|\eta|\le
    \frac{K}{\wmin}|\eta|
\end{aligned}
$$

This proves the claimed \boxed{\ref{th:stability_three}}.
\end{proof}

%% file: utils/related_works.tex
\section{Related Work}\label{sec:related}
\noindent
Dantzig~\cite{dantzig1957} introduced the fractional knapsack problem alongside its greedy solution. Bounded formulations and other knapsack variants have been extensively explored~\cite{martello1990,pisinger2000,chen2024bounded}. Furthermore, previous works have examined bounded-variable programs of the form~\eqref {eq:P}~\cite{bitran1981,ibaraki1988,patriksson2008}. Within polyhedral sensitivity analysis, the discontinuous nature of optimal-solution maps in parametric linear programming remains a foundational topic~\cite{hoffman1952,robinson1980,clarke1990,rockafellar1998,bonnans2000,dontchev2009}. Commonly known as Sprumont's uniform rule~\cite{sprumont1991} and recognised as the constrained equal awards rule within axiomatic rationing~\cite{young1994,moulin2002,thomson2003}, water-filling provides the foundation for both $\alpha$-fair rate allocation and proportional networking approaches~\cite{mo2000,lan2010}. Our Theorem~\ref{th:characterise} extends these concepts to account for capacity ceilings along with heterogeneous costs. As detailed in Theorem~\ref{th:inner}, the resulting loss resembles a price of fairness as discussed in earlier studies~\cite{bertsimas2011,caragiannis2012,barman2020,nicosia2017}. However, this limitation arises from limited information rather than a direct fairness constraint. Consequently, this links our findings to mathematical programming aggregation bounds~\cite{geoffrion1977,zipkin1980vars,rogers1991}. Similar mechanisms for discretisation, metric clustering, and coarsening can be found throughout existing literature~\cite{mcafee2002,hoppe2010,dwork2012,chierichetti2017,jung2020,chakrabarti2022,gonzalez1985,hochbaum1985,lloyd1982,gray1998}. While robust optimisation accounts for parameter uncertainty~\cite{bental1999,bertsimas2004,bental2009,bertsimas2011theory,delage2010,blanchet2019}, it continues to sort based on nominal values. On the other hand, smoothing techniques abandon the solution's polyhedral sparsity~\cite{tikhonov1977,nesterov2005}. In contrast, grouped allocation preserves both the feasible region and the objective, solely restricting the internal rule applied to each group.

%% file: utils/Evaluation.tex
\section{Evaluation}\label{sec:evaluation}
\noindent
The preceding sections make three quantitative claims about the grouped allocation: the loss with respect to the exact optimum is confined to the boundary group and scales with $\delta$ (Theorems~\ref{th:inner} and~\ref{th:localise-master}); the relative loss depends on $\delta$ rather than on $n$ (Equation~\eqref{eq:relative-bound}), with the per-item bound~\eqref{eq:asymptotic-vanish} decaying as $\mathcal{O}(K/n)$ whenever the group-size cap $K$ stays bounded; and the two-stage procedure runs in $\mathcal{O}(n+m\log m+|\Gm|\log|\Gm|)$ time (Theorem~\ref{th:complexity}), which should make it competitive with Dantzig's $\mathcal{O}(n\log n)$ rule in practice. This section tests each claim empirically. We compare the grouped allocation $z^\gp$ against the exact greedy optimum $z^\star$ on randomly generated instances, measuring the relative loss $E(C)/V^\star(C)$ as a function of $n$ (Figure~\ref{fig:error_rate_vs_n}) and of $\delta$ (Figure~\ref{fig:error_rate_tolerance}), and the wall-clock runtime of both methods as a function of $n$ (Figure~\ref{fig:runtime_comparison}).

\subsection{Experimental setup}\label{sec:eval-setup}
\noindent
We implemented both algorithms in C++17 and compiled them with optimisations enabled. We ran all experiments single-threaded on an Intel~i7 processor. % CHECK: add compiler/flags (e.g., g++ -O2), exact CPU model, and RAM if available.
For the baseline, we use Dantzig's greedy rule~\cite{dantzig1957} as described in the proof of Lemma~\ref{lem:price}. Since this procedure returns an optimal solution to~\eqref {eq:P}, its value serves as the reference $V^\star(C)$ against which the loss $E(C)=V^\star(C)-V^\gp(C)$ in~\eqref {eq:error-def} is computed. For the grouped method, we run Algorithm~\ref{alg:grouped-allocation}.

% \paragraph{Instance generation.}
For each value of $n$, we draw $v_i$, $w_i$, and capacity limits $u_i\in[0,1]$ independently at random. % CHECK: state the distributions and ranges (e.g., $v_i\sim U(\cdot,\cdot)$, $w_i\sim U(\cdot,\cdot)$, $u_i\sim U(0,1)$) and the random seed / number of repetitions per data point.
The budget $C$ is set to a fixed fraction of the total demand $F^w_{[n]}$, so that $C<F^w_{[n]}$ holds and the budget is binding. % CHECK: state the fraction used.
The metric grouping of Definition~\ref{def:grouping} is built directly from the item attributes: the attribute space is partitioned into cells of radius $\delta$ around a fixed set of representatives, and every item is assigned to the cell containing it. Consequently, the number of groups $m$ is governed by the covering number of the attribute space at scale $\delta$ and does not grow with $n$ once $n$ is large. For the runtime experiments, the time reported for the grouped method includes this grouping step, so the comparison is against the full end-to-end cost of the procedure. % CHECK: confirm that the grouping step is included in the measured runtime, or adjust the sentence.

% \paragraph{Metrics.}
We report the relative loss $100\cdot E(C)/V^\star(C)$. Runtimes are reported in milliseconds of wall-clock time. Each data point is the average over several independently generated instances. % CHECK: state the number of repetitions.

\begin{figure}[!t]
    \centering
    \begin{subfigure}{0.49\columnwidth}
        \centering
        \includegraphics[width=\linewidth]{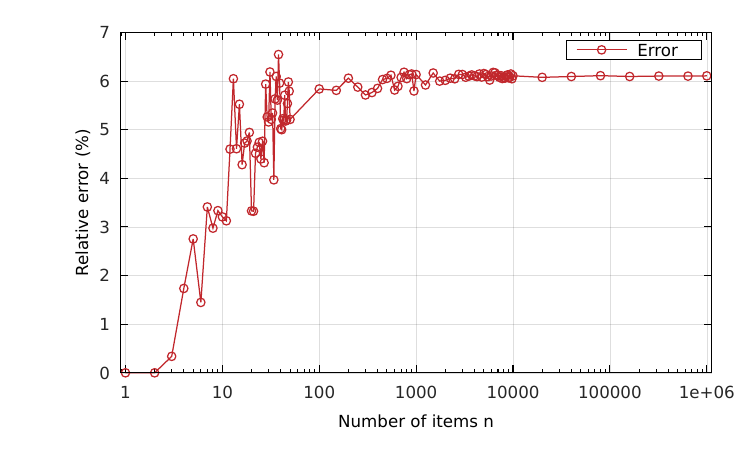}
        \caption{Relative loss versus $n$ at a fixed tolerance $\delta$.}
        \label{fig:error_rate_vs_n}
    \end{subfigure}
    \hfill
    \begin{subfigure}{0.49\columnwidth}
        \centering
        \includegraphics[width=\linewidth]{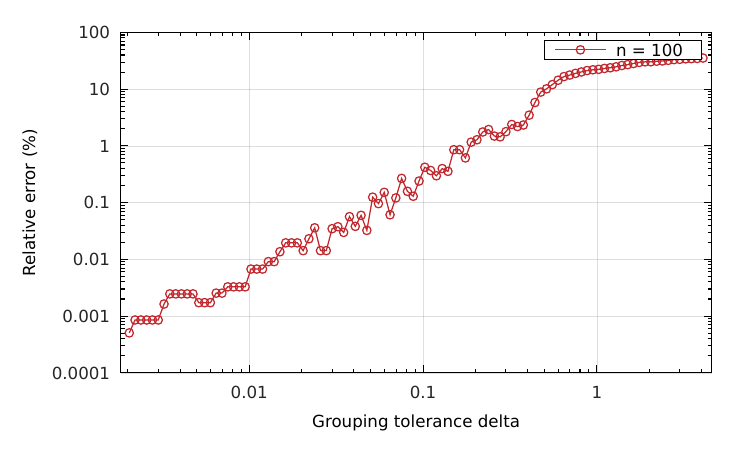}
        \caption{Relative loss versus $\delta$ at $n=100$ (log--log axes).}
        \label{fig:error_rate_tolerance}
    \end{subfigure}
    \caption{Relative loss $100\frac{E(C)}{V^\star(C)}$ of the grouped allocation with respect to the exact optimum. (a)~For a fixed $\delta$, the loss vanishes for tiny instances, where every group is a singleton, and settles at a constant level as $n$ grows, as predicted by~\eqref{eq:relative-bound}. (b)~For a fixed $n$ the loss grows polynomially with $\delta$ until a single group covers the whole instance, after which it saturates.}
    \label{fig:error_figures}
\end{figure}

\subsection{Loss as a function of the instance size}\label{sec:eval-n}
\noindent
Figure~\ref{fig:error_rate_vs_n} plots the relative loss against $n$, for a fixed $\delta$. % CHECK: state the value of $\delta$ used for this figure.
Three regimes are visible. For $n\le2$ the loss is exactly zero. With so few items, every group is a singleton, so the boundary group $\Gm$ contains a single item, the allocation of Theorem~\ref{th:characterise} reduces to $z^\wf_i=\frac{c^\star}{w_i}$, and the intra-group loss $\Delta_\Gm(c^\star)$ of~\eqref{eq:gap-def} vanishes. Theorem~\ref{th:localise-master} then gives $E(C)=\Delta_\Gm(c^\star)=0$, and the grouped allocation coincides with Dantzig's solution.

\noindent
For $n$ between roughly $3$ and $50$, the loss rises steeply and fluctuates strongly from one value of $n$ to the next. In this range, the groups begin to acquire more than one member, so the boundary group can now contain items with distinct costs, and the $\frac{\whi_\Gm-\wlo_\Gm}{\whi_\Gm+\wlo_\Gm}$ of Theorem~\ref{th:inner} becomes active. The fluctuations are a discretisation effect: with few items, adding a single item changes which group is the boundary group $\Gm$, how many members it has, and where the leftover share $c^\star$ falls inside $[0, F^w_\Gm)$. In particular, whenever $C$ lands close to a saturation threshold $F^w_{\le k}$ the leftover $c^\star$ is close to zero, and the loss collapses, as noted at the end of the proof of Theorem~\ref{th:localise-master}, whereas a leftover in the interior of the segment incurs a loss close to the bound of Theorem~\ref{th:tight}.

\noindent
For $n\ge10^3$, the curve flattens and stays at about $6\%$ up to $n=10^6$. This plateau is exactly the behaviour predicted by~\eqref{eq:relative-bound}. Once $n$ is large, the number of groups $m$ is fixed by the covering number of the attribute space at scale $\delta$, so the boundary group grows linearly with $n$: $U_\Gm=\Theta(n/m)$. At the same time, the budget is a fixed fraction of $F^w_{[n]}$ and therefore also grows linearly in $n$, and the clearing price $\nu^\star$ converges to a fixed quantile of the ratio distribution. The ratio $U_\Gm/C$ in~\eqref{eq:relative-bound} thus tends to a constant, and the relative loss stabilises at
$
    \frac{E(C)}{V^\star(C)}\ \lesssim\ \frac{U_\Gm}{C}\cdot\frac{\vrep_\Gm}{\nu^\star}\Bigl(\frac{\whi_\Gm-\wlo_\Gm}{\whi_\Gm+\wlo_\Gm}+\frac{\varepsilon_v}{\vrep_\Gm}\Bigr)
$,
a term that depends on $\delta$ but not on $n$. 
Note that the plateau does not contradict the vanishing per-item bound~\eqref{eq:asymptotic-vanish}, which is stated for a group-size cap $K$; here $K=\Theta(n)$ because $\delta$ is held fixed while $n$ grows, so $K/n$ is itself a constant. If instead $\delta$ were shrunk with $n$ so as to keep $K$ bounded, the per-item loss would decay as $\mathcal{O}(K/n)$. The level of the plateau is itself the effect of $\delta$. It is the loss one accepts for refusing to rank items whose attributes differ by less than $\delta$, and it is the term that Theorem~\ref{th:stability} trades against the unbounded sensitivity of Proposition~\ref{prop:eps}. 
% The next experiment shows how this effect is controlled.

\subsection{Sensitivity to the grouping tolerance}\label{sec:eval-delta}
\noindent
Figure~\ref{fig:error_rate_tolerance} fixes $n=100$ and sweeps $\delta$ over more than three orders of magnitude, with both axes on a logarithmic scale. Again, we can distinguish three regimes. For the smallest $\delta$s, the loss is on the order of $10^{-3}\%$. Most groups are singletons or pairs of nearly identical items, so $\whi_\Gm-\wlo_\Gm\le2L_w\delta$ is tiny, and the bound of Theorem~\ref{th:inner} is correspondingly small. In this regime the grouped allocation is, for all practical purposes, the exact optimum
% , while still enjoying the Lipschitz stability of Theorem~\ref{th:stability} inside the margin $\gamma$.

\noindent
Over the central part of the range, the curve is close to a straight line in $\log-\log$ coordinates, i.e., the loss grows polynomially with $\delta$. The slope is noticeably steeper than one. This is consistent with the structure of the bound~\eqref{eq:localised-error}, in which two factors depend on $\delta$ simultaneously: the term $\frac{\whi_\Gm-\wlo_\Gm}{\whi_\Gm+\wlo_\Gm}\le L_w\frac{\delta}{\wmin}$ grows linearly with $\delta$, and the $U_\Gm$ of the boundary group also grows.
% because coarser cells contain more items. The product of the two terms is therefore superlinear in $\delta$ once the cells are populated by more than one item. The $\mathcal{O}(\delta)$ statement of Theorem~\ref{th:inner} refers to a group of fixed composition; the empirical exponent reflects the additional growth of $|\Gm|$ with $\delta$. In absolute terms, the loss stays below $1\%$ for tolerances up to about $\delta\approx0.1$ on these instances, which provides a practical operating range: $\delta$ can be set to the measurement resolution of the data without a material sacrifice in objective value.

\noindent
Beyond $\delta\approx1$, the curve bends and saturates at roughly $30\%$. At this scale, the grouping collapses to $m=1$, the boundary group is $\Gm=[n]$, and the grouped allocation degenerates into a plain water-filling allocation of the whole instance. The loss then equals $\Delta_{[n]}(C)$, which no longer depends on $\delta$, and it is bounded by Theorem~\ref{th:inner} with the global extremes $\wlo=\wmin$ and $\whi=w_{\max}$. In this case, we forgo all ranking information. 
% Between the two extremes, the curve interpolates smoothly, confirming that $\delta$ acts as a single continuous knob between the exact but discontinuous Dantzig solution ($\delta\to0$) and the fully aggregated water-filling solution ($\delta\to\infty$).

\begin{figure}[!t]
    \centering
    \includegraphics[width=0.6\linewidth]{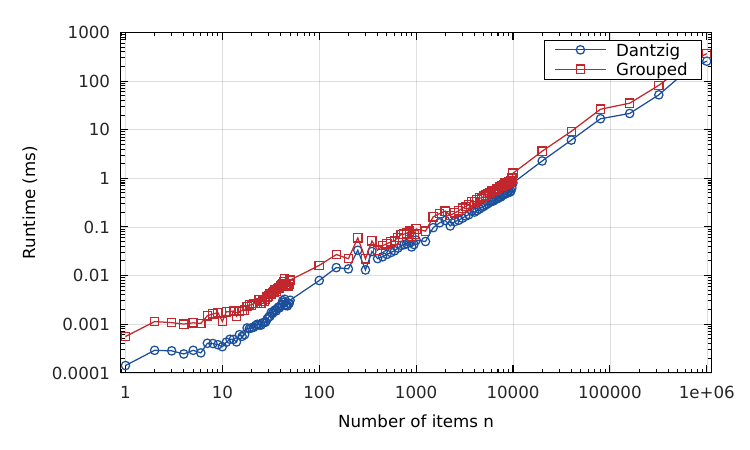}
    \caption{Wall-clock runtime of Dantzig's greedy rule and of the grouped allocation (Algorithms~\ref{alg:grouped-allocation} and~\ref{alg:water-level}) as a function of $n$ (log--log axes). The grouped method carries a constant overhead on small instances and matches the greedy rule on large ones.}
    \label{fig:runtime_comparison}
\end{figure}

\subsection{Runtime}\label{sec:eval-runtime}
\noindent
Figure~\ref{fig:runtime_comparison} compares the runtime of the two methods as $n$ ranges from $1$ to $10^6$, using a moderate $\delta$ for the grouping.
Both curves are straight lines with nearly identical slopes on the $\log-\log$ axes. The grouped method lies above Dantzig's rule by some factor on small instances, and the gap closes as $n$ increases; for $n\ge10^4$ the two curves are almost indistinguishable, and for the largest instances they differ by less than a factor of two.

The behaviour on small instances is explained by the fixed overhead of Stage~1 of Algorithm~\ref{alg:grouped-allocation}: computing the $m$ representative ratios and demands $F^w_{G_k}$, sorting the groups, accumulating the prefix sums $F^w_{\le k}$, and locating $k^\star$. For $n\le10$, the overhead is visible as a vertical offset. The convergence on large instances follows from Theorem~\ref{th:complexity}. The only super-linear terms in the running time of the grouped method are $\mathcal{O}(m\log m)$ for sorting the groups and $\mathcal{O}(|\Gm|\log|\Gm|)$ for sorting the ceilings of the boundary group, and both $m$ and $|\Gm|$ are much smaller than $n$; every other step is a single linear pass over the item arrays. Dantzig's rule sorts all $n$ ratios, but for $n\le10^6$ the $\log n$ factor is at most $20$ and is absorbed into the constant factors of the memory traffic that both implementations share. The asymptotic advantage of $\mathcal{O}(n+m\log m)$ over $\mathcal{O}(n\log n)$ therefore does not yet translate into a measurable speed-up at this scale, but neither does the two-stage structure impose any asymptotic penalty. 
% Replacing the sort of the ceilings in Algorithm~\ref{alg:water-level} by a weighted-median selection, as discussed in Theorem~\ref{th:complexity}, would remove the last super-linear term but would not change the picture, since $|\Gm|\log|\Gm|$ is already negligible.

\section{Summary}\label{sec:eval-summary}
\noindent
The experiments confirm the three claims. The loss is localised and controlled by $\delta$: it is exactly zero when the boundary group is a singleton, and for a fixed $n$ it decreases polynomially as $\delta$ is refined, reaching negligible levels at fine tolerances (Figure~\ref{fig:error_rate_tolerance}). The loss is a function of $\delta$, not of $n$: for a fixed $\delta$, the relative loss settles at a constant plateau as $n\to\infty$ (Figure~\ref{fig:error_rate_vs_n}), matching the $n$-independent form of~\eqref{eq:relative-bound}. The plateau height is the price paid for the Lipschitz stability of Theorem~\ref{th:stability}, which Dantzig's rule cannot offer (Proposition~\ref{prop:eps}). The grouped allocation has the same empirical growth rate as Dantzig's rule and matches its runtime on large instances (Figure~\ref{fig:runtime_comparison}), in line with Theorem~\ref{th:complexity}.
In practice, the tolerance should be chosen at the resolution of the data, i.e., at the scale below which attribute differences are indistinguishable from noise. At such a tolerance, the experiments show a loss well below one per cent, while the allocation gains a bounded perturbation quotient of $\frac{K}{\wmin}$ in place of the unbounded quotient of the exact solution. The source code of the paper is available at \url{https://github.com/cabhinaba3/Grouped-Knapsack}